\def\orcidID#1{\href{https://orcid.org/#1}{\protect\raisebox{1.25pt}
{\protect\includegraphics{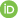}}}}
\algnewcommand\algorithmicforeach{\textbf{for each}}
\algnewcommand\algorithmicinput{\textbf{Input:}}
\algnewcommand\algorithmicoutput{\textbf{Output:}}
\algnewcommand\Input{\item[\algorithmicinput]}
\algnewcommand\Output{\item[\algorithmicoutput]}
\algnewcommand\algorithmicproc{\textbf{Procedure}}
\algnewcommand\procedure{\item[\algorithmicproc]}
\newcommand{\update}[1]{\textcolor[RGB]{148,0,211}{#1}}
\renewcommand{\paragraph}[1]{\noindent\textbf{#1\,}}
\newcommand{\liff}{\Leftrightarrow}
\newcommand{\limpl}{\Rightarrow}
\newcommand{\strcon}[1]{\texttt{"#1"}}
\newcommand{\define}[1]{\textsl{#1}}
\newcommand{\Mo}{\mathbf{I}}
\newcommand{\Model}{\mathcal{I}}
\newcommand{\T}{\mathcal{T}}
\newcommand{\ite}{\mathit{ite}\xspace}
\newcommand{\slen}{\textsf{len}}
\newcommand{\cont}{\textsf{contains}} 
\newcommand{\true}{\mathit{True}}
\newcommand{\false}{\mathit{False}}
\newcommand{\vect}[1]{\boldsymbol{#1}}
\newcommand{\tuple}{\ensuremath\mathit{tup}}
\newcommand{\sbv}[1]{\ensuremath{\mathit{BV}_{\![#1]}}}
\newcommand{\bvult}[1]{\ensuremath<_{[#1]}}
\newcommand{\bvulte}[1]{\ensuremath\leq_{[#1]}}
\newcommand{\bvugt}[1]{\ensuremath>_{[#1]}}
\newcommand{\bvugte}[1]{\ensuremath\geq_{[#1]}}
\newcommand{\bvuprec}[1]{\ensuremath\prec_{[#1]}}
\newcommand{\bvupreccurlyeq}[1]{\ensuremath\preccurlyeq_{[#1]}}
\newcommand{\bvusucc}[1]{\ensuremath\succ_{[#1]}}
\newcommand{\bvusucccurlyeq}[1]{\ensuremath\succcurlyeq_{[#1]}}
\newcommand{\bvadd}[1]{\ensuremath+_{[#1]}}
\newcommand{\bvsub}[1]{\ensuremath-_{[#1]}}
\newcommand{\shl}[1]{\ensuremath\ll_{[#1]}}
\newcommand{\tops}{\textsc{Top}}
\newcommand{\pops}{\textsc{Pop}}
\newcommand{\solve}{\textsc{Solve}}
\newcommand{\better}{\ensuremath{\textsc{Better}}\xspace}
\newcommand{\Next}{\ensuremath{\textsc{Next}}\xspace}
\newcommand{\FV}{\ensuremath{\mathit{FV}\xspace}}
\newcommand{\mapsfrom}{\leftarrow}
\newcommand{\gomt}{GOMT\xspace}
\newcommand{\GO}{\ensuremath{{\mathcal{GO}}}\xspace}
\newcommand{\PO}{\ensuremath{{\mathcal{PO}}}\xspace}
\newcommand{\BO}{\ensuremath{{\mathcal{BO}}}\xspace}
\newcommand{\CO}{\ensuremath{{\mathcal{CO}}}\xspace}
\newif\ifarxiv
\begin{document}
\title{Generalized Optimization Modulo Theories}
%
\titlerunning{Generalized Optimization Modulo Theories}
%
\author{}\institute{}
\author{Nestan Tsiskaridze\inst{1}\orcidID{0000-0002-4729-9770}\Envelope
\and
Clark Barrett\inst{1}\orcidID{0000-0002-9522-3084}
\and
Cesare Tinelli\inst{2}\orcidID{0000-0002-6726-775X}}

\authorrunning{N. Tsiskaridze et al.}
%
\institute{Stanford University,
  Stanford, California, USA\\
\email{\{nestan,barrett\}@cs.stanford.edu}\\
\and
The University of Iowa,
   Iowa City, Iowa, USA\\
\email{cesare-tinelli@uiowa.edu}}
%

\maketitle              
\begin{abstract}
Optimization Modulo Theories (OMT) has emerged as an important extension of the highly successful Satisfiability Modulo Theories (SMT) paradigm.  The OMT problem requires solving an SMT problem with the restriction that the solution must be optimal with respect to a given objective function.
We introduce a generalization of the OMT problem where, in particular, objective functions can range
over partially ordered sets.
We provide a formalization of and an abstract calculus for the generalized OMT problem and prove their key correctness properties. Generalized OMT extends previous work on OMT in several ways. First, in contrast to many current OMT solvers, our calculus is theory-agnostic, enabling the optimization of queries over any theories or combinations thereof. Second, our formalization unifies both single- and multi-objective optimization problems, allowing us to study them both in a single framework and facilitating the use of objective functions that are not supported by existing OMT approaches. Finally, our calculus is sufficiently general to fully capture a wide variety of current OMT approaches (each of which can be realized as a specific strategy for rule application in the calculus) and to support the exploration of new search strategies. Much like the original abstract DPLL(T) calculus for SMT, our Generalized OMT calculus is designed to establish a theoretical foundation for understanding and research and to serve as a framework for studying variations of and extensions to existing OMT methodologies.

\keywords{Optimization Modulo Theories (OMT) \and Optimization\and Satisfiability Modulo Theories (SMT) \and Abstract Calculus.}
\end{abstract}

\section{Introduction}

Over the past decade, the field of Optimization Modulo Theories (OMT) has emerged, inspiring the interest of researchers and practitioners alike. OMT builds on the highly successful Satisfiability Modulo Theories (SMT)~\cite{smt} paradigm and extends it: while the latter focuses solely on finding a theory model for a first-order formula, the former adds an objective term that must be optimized with respect to some total ordering over the term's domain.

The development of OMT solvers has fostered research across an expanding spectrum of applications, including scheduling and planning with resources~\cite{CraciunasEtAl16,KovasznaiEtAl18Puli,LeofanteEtAl21,Bit-MonnotEtAl,eracscu2020applying,feng2022reliability,jin2021joint,MarchettoEtAl21,patti2022deadline,shen2022qos},
formal verification and model checking~\cite{LiuEtAl,RatschanEtAl}, program analysis~\cite{LorenzoEtAl,DBLP:conf/vmcai/JiangCWW17,Kar7,yao2021program,HenryEtAl14}, 
requirements engineering and specification synthe\-sis~\cite{Nguyen2016RequirementsEA,NguyenEtAl17,NguyenEtAl18,GavranEtAl20}, 
security analysis~\cite{BertolissiSR18,PaolettiEtAl19,tarrach2022threat,erata2023towards}, 
system design and configuration~\cite{DemarchiEtAl19,demarchi2019automated,DBLP:conf/fmcad/TsiskaridzeSMSL21,TiernoEtAl22,rybalchenko2021supercharging,ParkEtAl20,lee2020sp,Knusel21}, 
machine learning~\cite{TESO2017166,sivaraman2020counterexample}, and quantum annealing~\cite{Bian2017SolvingSA}. 

Various OMT procedures have been developed for different types of optimization objectives (e.g., single- and multi-objective problems), underlying theories (e.g., arithmetic and bitvectors), and search strategies (e.g., linear and binary search). 
We provide an overview of established OMT techniques in Section~\ref{section:related_work}.
An extensive survey can be found in Trentin~\cite{patrickthesis}. 

We introduce a proper generalization of the OMT problem and an abstract calculus for this generalization whose main goal is similar to that of the DPLL($T$) calculus for SMT~\cite{NieOT-JACM-06}: to provide both a foundation for theoretical understanding and research and a blueprint for practical implementations. 
Our approach is general in several ways. 
First, in contrast to previous work in OMT, it is parameterized by the optimization order, which does not need to be total, and it is not specific to any theory or optimization technique, making the calculus easily applicable to new theories or objective functions.
Second, it encompasses both single- and multi-objective optimization problems, allowing us to study them in a single, unified framework and enabling combinations of objectives not covered in previous work. 
Third, it captures a wide variety of current OMT approaches, 
which can be realized as instances of the calculus together with specific strategies for rule application.  
Finally, it provides a framework for the exploration of new optimization strategies.
\medskip

\paragraph{Contributions}
To summarize, our contributions include:

\begin{itemize}
    \item a formalization of a generalization of OMT to partial orders that unifies traditional single- and multi-objective optimization problems;
    \item a theory-agnostic abstract calculus for generalized OMT that can also be used to describe and study previous OMT approaches; 
    \item a framework for understanding and exploring search strategies for generalized OMT; and
    \item proofs of correctness for important properties of the calculus.
\end{itemize}

The rest of the paper is organized as follows. Section~\ref{section:background} introduces background and notation. Section~\ref{section:framework} defines the Generalized OMT problem. Section~\ref{section:derivation_rules} presents the calculus, provides an illustrative example of its use and addresses its correctness\footnote{%
\ifarxiv
Full proofs and an additional example are provided in Appendix~\ref{appendix:proofs} and Appendix~\ref{appendix:examples}.
\else
Full proofs and an additional example are provided in a tech report~\cite{GOMT}.
\fi
}.
Finally, Section~\ref{section:related_work} discusses related work, and Section~\ref{section:conclusion} concludes.

\section{Background}
\label{section:background}

\begin{table}[t]
\footnotesize
  \centering
{%
  \begin{tabularx}{\textwidth}{l@{\hspace{.2cm}}l@{\hspace{.2cm}}X}
    \hline
    \textbf{Syntax} & \textbf{Semantics} & \textbf{Meaning} \\
    \hline
    \textsf{Bool,Int,Real,\sbv{n},Str} & &Sorts for Booleans, integers, reals, bitvectors of length $n$, and character strings \\
    $+, -, \times, \div$ & &Arithmetic operators over reals/integers \\
    $<_\textsf{R}, >_\textsf{R}, \leq_\textsf{R}, \geq_\textsf{R}$ & $\prec_\textsf{R}, \succ_\textsf{R}, \preccurlyeq_\textsf{R}, \succcurlyeq_\textsf{R}$ & Comparison operators over reals\\
    $<_\textsf{Int}, >_\textsf{Int}, \le_\textsf{Int}, \ge_\textsf{Int}$ & $\prec_\textsf{Int},  \succ_\textsf{Int}, \preccurlyeq_\textsf{Int}, \succcurlyeq_\textsf{Int}$ & Comparison operators over integers\\
    $+_{[n]}$,$-_{[n]}$,$\times_{[n]}$,$\div_{[n]}$
    & &Arithmetic modulo $2^n$ operators\\
    $\bvult{n}, \bvugt{n}$, $\bvulte{n}, \bvugte{n}$ 
    & $\bvuprec{n}, \bvusucc{n}$, $\bvupreccurlyeq{n}, \bvusucccurlyeq{n}$ &(Unsigned) comparison operators over \sbv{n} terms\\
    $x \shl{n} y$ & &Shift left operator over \sbv{n} terms\\
    $\ite(c, x, y)$ &
    & If-then-else operator (if $c$ then $x$ else $y$) \\
    $\tuple(t_1,\dots,t_n)$ &
    & $n$-ary tuple where element $i$ is $t_i$\\
    $<_{\textsf{str}}, >_{\textsf{str}}, \leq_{\textsf{str}}, \geq_{\textsf{str}}, $ & $\prec_{\textsf{str}}, \succ_{\textsf{str}}, \preccurlyeq_{\textsf{str}}, \succcurlyeq_{\textsf{str}}$ &Strict and non-strict lexicographic and reverse lexicographic orders on strings\\
    $\epsilon$ & &The empty string\\
    $x \cdot y$ & & String concatenation operator\\
    $\slen(x)$ & & String length operator\\
    $\cont(x, y)$ & & String containment operator (true iff $y$ is a substring of $x$)\\
    \hline\\
  \end{tabularx}
}
  \caption{Theory-specific notation.}
  \label{tab:theoryops}
\end{table}

We assume the standard many-sorted first-order logic setting for SMT, with the usual
notions of signature, term, formula, and interpretation.
We write $\Model\models\phi$ to mean that formula $\phi$ holds in or is \emph{satisfied} by an interpretation $\Model$. 
A \define{theory} is a pair $\T = (\Sigma, \Mo)$, where
$\Sigma$ is a signature and  $\Mo$ is a class of $\Sigma$-interpretations.
We call the elements of $\Mo$ \define{$\T$-interpretations}.
We write $\Gamma\models_\T\phi$, where $\Gamma$ is a formula (or a set of formulas), 
to mean that $\Gamma$ \emph{$\T$-entails} $\phi$, 
i.e., every $\T$-interpretation that satisfies (each formula in) $\Gamma$ 
satisfies $\phi$ as well.
For convenience, for the rest of the paper, 
\emph{we fix a background theory $\T$ with equality and with signature $\Sigma$.}
We also fix an infinite set $\mathcal{X}$ of sorted variables with sorts from $\Sigma$ and assume $\prec_\mathcal{X}$ is some total order on $\mathcal{X}$.
We assume that all terms and formulas are $\Sigma$-terms and $\Sigma$-formulas with free variables from $\mathcal{X}$.
Since the theory $\T$ is fixed, we will often abbreviate $\models_\T$ as $\models$ and consider only interpretations 
that are $\T$-interpretations assigning a value to every variable in $\mathcal{X}$.  
At various places in the paper, we use sorts and operators from standard SMT-LIB
theories such as integers, bitvectors, strings,\footnote{For simplicity,
we assume strings are over characters ranging only from \texttt{'a'} to
\texttt{'z'}.} or data types~\cite{smt-lib}.
We assume that every $\T$-interpretation interprets them in the same (standard) way.  Table~\ref{tab:theoryops} lists theory symbols used in this paper and their meanings.
A $\Sigma$-formula $\phi$ is
\define{satisfiable} (resp., \define{unsatisfiable}) \define{in $\T$}
if it is satisfied by some (resp., no) $\T$-interpretation.


Let $s$ be a $\Sigma$-term. We denote by $s^\Model$ the value of $s$ in an interpretation $\Model$, defined as usual by recursively determining the values of sub-terms. 
We denote by $\FV(s)$ the set of all variables occurring in $s$. 
Similarly, we write $\FV(\phi)$ to denote the set of all the free variables occurring in a formula $\phi$.
If $\FV(\phi) = \{v_1,\dots,v_n\}$, where for each $i\in[1,n), v_i \prec_{\mathcal{X}} v_{i+1}$, then the relation \define{defined by} $\phi$ (in $\T$) is
$\{(v_1^{\Model},\dots,v_n^{\Model}) \mid 
 \Model\models\phi \text{ for some $\T$-interpretation $\Model$} \}$.
A relation is \define{definable in $\T$} if there is some formula 
that defines it.
Let $\vect{v}$ be a tuple of variables $(v_1,\dots,v_n)$, and let $\vect{t}=(t_1,\dots,t_n)$ be a tuple of $\Sigma$-terms, such that $t_i$ and $v_i$ are of the same sort for $i\in [1,n]$; then, we denote by $s[{\vect{v} \mapsfrom \vect{t}]}$ 
the term obtained from $s$ by simultaneously replacing each occurrence of variable $v_i$ 
in $s$ with the term $t_i$.

If $S$ is a finite \emph{sequence} $(s_1,\dots,s_n)$,  we write \tops($S$) to denote, $s_1$, the first element of $S$ in $S$;  we write \pops($S$) to denote the subsequence $(s_2,\dots,s_n)$ of $S$.  We use $\emptyset$ to denote both the empty set and the empty sequence.  We write $s\in S$ to mean that $s$ occurs in the sequence $S$, and write $S \circ S'$ for the sequence obtained by appending $S'$ at the end of $S$.
%

We adopt the standard notion of \define{strict partial order} $\prec$ 
on a set $A$, that is,
a relation in $A \times A$ that is irreflexive, asymmetric, and transitive.
The relation $\prec$ is a \define{strict total order} if, in addition, 
$a_1 \prec a_2$ or $a_2 \prec a_1$ for every pair $a_1, a_2$ of distinct elements of $A$.
As usual, we will call $\prec$ \define{well-founded} over a subset $A'$ of $A$
if $A'$ contains no infinite descending chains.
An element $m \in A$ is \define{minimal (with respect to $\prec$)}
if there is no $a \in A$ such that $a \prec m$.  If $A$ has a unique minimal element, it is called a \define{minimum}.

\section{Generalized Optimization Modulo Theories}
\label{section:framework}

\noindent
We introduce a formalization of the \emph{Generalized Optimization Modulo Theories} problem which unifies single- and multi-objective optimization problems and lays the groundwork for the calculus presented in Section~\ref{section:derivation_rules}. 
 
\subsection{Formalization} 

For the rest of the paper, we fix a theory $\T$ with some signature $\Sigma$.

\begin{definition}[Generalized Optimization Modulo Theories (\gomt)]
\label{def:omt}
A \define{Generalized Optimization Modulo Theories problem} is a tuple $\GO:= \langle t, \prec, \phi \rangle$, where:
\begin{itemize}
    \item $t$, a $\Sigma$-term of some sort $\sigma$, is an \define{objective term} to optimize;

    \item $\prec$ is a \define{strict partial order definable in $\T$}, whose defining formula has two free variables, each of sort $\sigma$; and

    \item $\phi$ is a $\Sigma$-formula.

\end{itemize}

\end{definition}

\noindent
For any \gomt problem \GO and $\T$-interpretations $\Model$ and $\Model'$, we say that:
\begin{itemize}
\item $\Model$ is \GO-\define{consistent} if $\Model \models \phi$; 
\item $\Model$ \GO-\define{dominates} $\Model'$, denoted by $\Model <_{\GO} \Model'$, if $\Model$ and $\Model'$ are \GO-consistent
    and $ t^{\Model} \! \prec  t^{\Model'}$; and
\item $\Model$ is a \GO-\define{solution} if $\Model$ is \GO-consistent and no $\T$-interpretation \GO-domi\-nates $\Model$.
\end{itemize}

Informally, the term $t$
represents the \emph{objective function}, whose value we want to optimize.
The order $\prec$ is used to compare values of $t$,
with a value $a$ being considered \emph{better} than a value $a'$ if $a \prec a'$. Finally, the formula $\phi$ imposes constraints on the values that $t$ can take.
It is easy to see that the value of $t^\Model$ assigned by a
\GO-solution $\Model$ is always minimal.  As a special case, if
$\prec$ is a total order, then $t^\Model$ is also unique (i.e., it is a minimum).
Once we have fixed a \gomt problem \GO, we will informally refer 
to a \GO-consistent interpretation as a \define{solution (of $\phi$)}
and to a \GO-solution as an \define{optimal solution}.



Our notion of Generalized OMT is closely related to one by Bigarella et al.~\cite{DBLP:conf/frocos/BigarellaCGIJRS21}, which defines a notion of OMT for a generic background theory using a predicate that corresponds to a total order in that theory.
Definition~\ref{def:omt} generalizes this in two ways.  First, we allow partial orders, with total orders being a special case. 
One useful application of this generalization is the ability to model multi-objective problems as single-objective problems over a suitable partial order, as we explain below. 
Second, we do not restrict $\prec$ to correspond to a predicate symbol in the theory.  Instead, any partial order \emph{definable} in the theory can be used. This general framework captures a large class of optimization problems.

\begin{example}\label{minLRA}
Suppose $\T$ is the theory of real arithmetic with the usual signature.
Let $\GO:= \langle x+y, \prec, 0 < x\,\wedge\,xy=1
\rangle$, where $x$ and $y$ are variables of sort \textsf{Real} and
$\prec$ is defined by the formula $v_1 <_\textsf{R} v_2$ (where $v_1
\prec_\mathcal{X} v_2$). A \GO-solution is any interpretation that interprets $x$ and $y$ as 1.
\end{example}

\begin{example}\label{maxLIA}
With $\T$ now being the theory of integer arithmetic,
let $\GO= \langle x, \prec, x^2 < 20 \rangle$, where $x$ is of sort
\textsf{Int}, and $\prec$ is defined by $v_1 >_\textsf{Int} v_2$ (where $v_1
\prec_\mathcal{X} v_2$).  A \GO-solution must interpret $x$ as the maximum integer satisfying $x^2 < 20$ (i.e., $x$ must have value 4).
\end{example}

The examples above are both instances of what previous work refers to
  as single-objective optimization problems~\cite{patrickthesis}, with
  the first example being a minimization and the second a maximization
  problem.  The next example illustrates a less conventional ordering.

Note that from
now on, to keep the exposition simple, we define partial orders $\prec$
appearing in \GO problems only \emph{semantically}, i.e., formally, but
without giving a specific defining formula.  However, it is easy to check that
all orders used in this paper are, in fact, definable in a suitable $\T$.

\begin{example}\label{maxAbsLIA}
Let $\GO = \langle x, \prec, x^2 < 20 \rangle$ be a variation of Example~\ref{maxLIA}, where now, for any integers $a$ and $b$, $a \prec b$ iff $|b| \prec_{\textsf{Int}} |a|$.  A \GO-solution can interpret $x$ either as 4 or $-4$.  Neither solution dominates the other since their absolute values are equal.
\end{example}

\noindent
We next show how multi-objective problems are also instances of Definition~\ref{def:omt}.

\subsection{Multi-Objective Optimization}
\label{section:multi}

We use the term \emph{multi-objective optimization} to refer to an optimization problem consisting of several sub-problems, each of which is also an optimization problem.  A multi-objective optimization may also require specific interrelations among its sub-problems.  In this section, we define several varieties of multi-objective optimization problems and show how each can be realized using Definition~\ref{def:omt}.  For each, we also state a correctness proposition which follows straightforwardly from the definitions.

In the following, given a strict ordering $\prec$, we will denote its reflexive closure by $\preccurlyeq$. 
We start with a multi-objective optimization problem which requires that the sub-problems be prioritized in lexicographical order~\cite{nuZ14, nuZ15, OptiMathSAT, OMTLAcosts, patrickthesis}.

\begin{definition}[Lexicographic Optimization (LO)]\label{def:LO}
A \define{lexicographic optimization problem} is a sequence of \gomt problems $\mathcal{LO}=(\GO_1,\dots,\GO_n)$, where $\GO_i := \langle t_i, \prec_i, \phi_i\rangle$ for $i \in [1,n]$. 
For $\T$-interpretations $\Model$ and $\Model'$, we say that:
\begin{itemize}[leftmargin=0.7em]
\item $\Model$ $\mathcal{LO}$-\define{dominates} $\Model'$, denoted by $\Model <_{\mathcal{LO}} \Model'$,  if  
$\Model$ and $\Model'$ are $\GO_i$-\emph{consistent} for each $i\in[1,n]$,
and for some $j \! \in \! [1,n]$:
\begin{itemize}[align=parleft]
\item[(i)] $t_i^{\Model} = t_i^{\Model'}$ for all $i\in[1,j)$; and
\item[(ii)] $t_j^{\Model}$ $\prec_j$ $t_j^{\Model'}$.
\end{itemize}
\item $\Model$ is a \define{solution} to $\mathcal{LO}$ iff $\Model$ is $\GO_i$-consistent for each $i$ and no $\T$-interpretation $\mathcal{LO}$-dominates $\Model$.
\end{itemize}
\end{definition}

\noindent
An $\mathcal{LO}$ problem can be solved by converting it into an instance of Definition~\ref{def:omt}.

\begin{definition}[$\GO_\mathcal{LO}$]\label{def:OMT-LO}
    Given an $\mathcal{LO}$ problem $(\GO_1,\ldots,\GO_n)$, 
    with $\GO_i := \langle t_i, \prec_i, \phi_i\rangle$ for $i \in [1,n]$,
    the corresponding \GO instance is defined as \\
    $\GO_\mathcal{LO}(\GO_1,\ldots,\GO_n) := \langle t, \prec_{\mathcal{LO}}, \phi \rangle$, where:
    \begin{itemize}
        \item $t=\tuple(t_1,\ldots,t_n)$;
        \qquad
        $\phi = \phi_1\wedge\cdots \wedge\phi_n$;
        \item if $t$ is of sort $\sigma\!$, then 
        $\prec_{\mathcal{LO}}$ is the lexicographic extension of $(\!\prec_1, \!\ldots\!,\!\prec_n)$ to $\sigma^\T\!$: 
        
        for $(a_1,\dots,a_n)$, $(b_1,\dots,b_n)\in \sigma^\T$,  $(a_1,\dots,a_n) \prec_{\mathcal{LO}} (b_1,\dots,b_n)$ iff 
        for some $j\in[1,n]:$
        \begin{itemize}[align=parleft]
        \item[(i)] $ a_i $  $=$  $ b_i$ for all
        $i \in [1,j)\text{;  and }$
        \item[(ii)] $a_j $  $\prec_j$  $b_j$.        
        \end{itemize}
   
  
  
    \end{itemize} 
\end{definition}

\noindent
Here and in other definitions below, we use the data type theory constructor $\tuple$ to construct the objective term $t$.  This is a convenient mechanism for keeping an ordered list of the sub-objectives and keeps the overall theoretical framework simple.  In practice, if using a solver that does not support tuples or the theory of data types, other implementation mechanisms could be used.
Note that if each sub-problem uses a total order, then $\prec_\mathcal{LO}$ will also be total.
\begin{proposition}
    Let $\Model$ be a $\GO_\mathcal{LO}$-solution.  Then $\Model$ is also a solution to the corresponding $\mathcal{LO}$ problem as defined in Definition~\ref{def:LO}.
\end{proposition}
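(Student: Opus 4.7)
The plan is to establish the two defining properties of an $\mathcal{LO}$-solution in Definition~\ref{def:LO}: (i) $\Model$ is $\GO_i$-consistent for every $i\in[1,n]$, and (ii) no $\T$-interpretation $\mathcal{LO}$-dominates $\Model$. Both will follow almost immediately from the way Definition~\ref{def:OMT-LO} packages the sub-problems into a single \GO instance, so the argument is essentially a translation between two shapes of the same condition.

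For (i), since $\Model$ is a $\GO_\mathcal{LO}$-solution, it is $\GO_\mathcal{LO}$-consistent, i.e., $\Model\models\phi$ where $\phi=\phi_1\wedge\cdots\wedge\phi_n$. Hence $\Model\models\phi_i$ for each $i$, giving $\GO_i$-consistency.

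For (ii), I would argue by contradiction. Suppose a $\T$-interpretation $\Model'$ $\mathcal{LO}$-dominates $\Model$. Then $\Model'$ is $\GO_i$-consistent for every $i$, so $\Model'\models\phi_1\wedge\cdots\wedge\phi_n=\phi$, making $\Model'$ $\GO_\mathcal{LO}$-consistent. Moreover, there exists $j\in[1,n]$ with $t_i^{\Model'}=t_i^{\Model}$ for all $i<j$ and $t_j^{\Model'}\prec_j t_j^{\Model}$. Since the objective term of $\GO_\mathcal{LO}$ is $t=\tuple(t_1,\ldots,t_n)$, we have $t^{\Model'}=(t_1^{\Model'},\ldots,t_n^{\Model'})$ and $t^{\Model}=(t_1^{\Model},\ldots,t_n^{\Model})$, and the very same witness $j$ satisfies clauses (i) and (ii) of the definition of the lexicographic extension $\prec_\mathcal{LO}$. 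Therefore $t^{\Model'}\prec_\mathcal{LO}t^{\Model}$, which means $\Model'<_{\GO_\mathcal{LO}}\Model$, contradicting the assumption that $\Model$ is a $\GO_\mathcal{LO}$-solution.

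There is no real obstacle here: the proof is a direct unfolding of definitions, relying on the fact that the clauses defining $<_\mathcal{LO}$ in Definition~\ref{def:LO} are syntactically mirrored by the clauses defining $\prec_\mathcal{LO}$ in Definition~\ref{def:OMT-LO}. The only point that deserves a brief explicit mention is that the conjunctive form of $\phi$ in $\GO_\mathcal{LO}$ is precisely what makes joint $\GO_i$-consistency equivalent to $\GO_\mathcal{LO}$-consistency, which is used in both directions (for $\Model$ to obtain (i), and for $\Model'$ to lift $\mathcal{LO}$-domination to $\GO_\mathcal{LO}$-domination).
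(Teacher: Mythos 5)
Your proof is correct and is exactly the direct definition-unfolding that the paper has in mind: the paper offers no explicit argument for this proposition, stating only that it ``follows straightforwardly from the definitions,'' and your two steps (conjunctive $\phi$ gives $\GO_i$-consistency; an $\mathcal{LO}$-dominating $\Model'$ lifts, via the same witness $j$, to a $\GO_\mathcal{LO}$-dominating one) are precisely that straightforward unfolding. No gaps.
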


\begin{example}[$\mathcal{LO}$]\label{ex:LOBV} 
Let $\GO_1\!:=\!\langle x, \prec_1, \!\true \rangle$ and $\GO_2\!:=\! \langle y \bvadd{2} z, \prec_2, \!\true\rangle$, where  $x, y, z$ are variables of sort \sbv{2}, $a \!\prec_1 \!b$ iff $a \!\bvuprec{2}\! b$, and $a \!\prec_2 \!b$ iff $a \!\bvusucc{2}\! b$. 
Now, let $\GO = \GO_\mathcal{LO}(\GO_1,\GO_2) = \langle t, \prec_\mathcal{LO}, \true\rangle$. Then, $t=\tuple(x,y\bvadd{2}z)$ and
$(a_1,a_2) \prec_\mathcal{LO} (b_1,b_2)$ iff $a_1 \bvuprec{2} b_1$ or $(a_1 = b_1 \text{ and } a_2 \bvusucccurlyeq{2} b_2 )$.

\noindent
Now, let $\Model$, $\Model'$, and $\Model''$ be such that: $x^\Model=11, y^\Model= 00, z^\Model=10$, and $t^{\Model} :=(11,10)$; $x^{\Model'}=01, y^{\Model'}=01, z^{\Model'}=01$, and $t^{{\Model'}}:=(01,10)$; $x^{\Model''}=01, y^{\Model''}=01, z^{\Model''}= 10$, and $t^{{\Model''}}:=(01,11)$.
Then, $\Model'' <_\GO \Model'<_\GO \Model$, since $(01,11) \prec_\mathcal{LO} (01,10) \prec_\mathcal{LO} (11,10)$. 

\end{example}

\noindent
We can also accommodate Pareto optimization~\cite{nuZ14, nuZ15, patrickthesis} in our framework.

\begin{definition}[Pareto Optimization (PO)]\label{def:PO}
A \define{Pareto optimization problem} is a sequence of \gomt problems $\PO = (\GO_1,\dots,\GO_n)$, where $\GO_i := \langle t_i, \prec_i,\\ \phi_i \rangle$ for $i \in [1,n]$. For any $\T$-interpretations $\Model$ and $\Model'$, we say that:
\begin{itemize}[leftmargin=0.7em]
    \item ${\Model}$ \PO-\define{dominates}, or \define{Pareto dominates}, $\Model'$, denoted by $\Model <_{\PO} \Model'$, if $\Model$ and $\Model'$ are \GO-\emph{consistent} w.r.t. each $\GO_i$, $i\in[1,n]$,
and: 
\begin{itemize}[leftmargin=3em, align=parleft]
\item[(i)] $t_i^{\Model} \preccurlyeq_i t_i^{\Model'}$ for all $i \in [1,n]$; and
\item[(ii)] for some $j \! \in \! [1,n]$,
        $t_j^{\Model} \prec_j t_j^{\Model'}$. 
\end{itemize}

\item $\Model$ is a \define{solution} to \PO iff $\Model$ is \GO-consistent w.r.t. each $\GO_i$ and no $\Model'$ \PO-dominates $\Model$. 
\end{itemize}    
\end{definition}
\begin{definition}[$\GO_\PO$]\label{def:OMT-PO} 
Given a PO problem $\PO = (\GO_1,\dots,\GO_n)$, we define $\GO_\PO(\GO_1,\dots,\GO_n) := \langle t, \prec_{\PO}, \phi \rangle$, where: 
    \begin{itemize}
        \item $t=\tuple(t_1,\ldots,t_n)$;\qquad
        $\phi = \phi_1\wedge\cdots \wedge\phi_n$; 
        \item 
        if $t$ is of sort $\sigma$, then $\prec_{\PO}$ is the pointwise extension of 
        $(\prec_1, \ldots,\prec_n)$ to $\sigma^{\T}$;
        for any $(a_1,\dots,a_n), (b_1,\dots,b_n) \in \sigma^\T$, $(a_1,\dots,a_n) \prec_{\PO} (b_1,\dots,b_n)$ iff:
        \begin{itemize}[align=parleft]
        \item[(i)] $a_i $  $\preccurlyeq_i$  $b_i$  $ \text{for all } i \in [1,n]\text{;  and}$
        \item[(ii)]  $a_j$  $\prec_j$  $b_j$  $ \text{for some } j\in[1,n]$. 
        \end{itemize}
        
        
        
        

    \end{itemize}
\end{definition}  
\begin{proposition}
    Let $\Model$ be a $\GO_\PO$-solution.  Then $\Model$ is also a solution to the corresponding \PO problem as defined in Definition~\ref{def:PO}.
\end{proposition}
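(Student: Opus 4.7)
The plan is to show this by essentially unfolding both definitions and observing that $\prec_\PO$ was constructed to mirror the Pareto-dominance condition exactly. First I would establish the easy half: from $\Model$ being $\GO_\PO$-consistent, we have $\Model \models \phi_1 \wedge \cdots \wedge \phi_n$, and hence $\Model \models \phi_i$ for each $i \in [1,n]$, so $\Model$ is $\GO_i$-consistent for every $i$, which is the consistency requirement in Definition~\ref{def:PO}.

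Next I would prove the non-domination condition by contraposition. Suppose, for the sake of contradiction, that some $\T$-interpretation $\Model'$ \PO-dominates $\Model$. Then by Definition~\ref{def:PO}, $\Model'$ is $\GO_i$-consistent for every $i$, so $\Model' \models \phi_i$ for each $i$, giving $\Model' \models \phi$; i.e., $\Model'$ is $\GO_\PO$-consistent. Moreover, the \PO-domination hypothesis yields $t_i^{\Model'} \preccurlyeq_i t_i^{\Model}$ for all $i$ and $t_j^{\Model'} \prec_j t_j^{\Model}$ for some $j$. Writing $t^{\Model'} = (t_1^{\Model'},\ldots,t_n^{\Model'})$ and $t^{\Model} = (t_1^{\Model},\ldots,t_n^{\Model})$, these are exactly clauses (i) and (ii) in the definition of the pointwise extension $\prec_\PO$ in Definition~\ref{def:OMT-PO}, so $t^{\Model'} \prec_\PO t^{\Model}$.

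Combining $\GO_\PO$-consistency of both $\Model$ and $\Model'$ with $t^{\Model'} \prec_\PO t^{\Model}$ gives $\Model' <_{\GO_\PO} \Model$, contradicting the hypothesis that $\Model$ is a $\GO_\PO$-solution. Hence no such $\Model'$ exists, and together with the consistency from the first step this shows $\Model$ is a solution to \PO.

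Since $\prec_\PO$ is defined precisely so that its clauses (i) and (ii) coincide with the clauses (i) and (ii) of \PO-domination, there is no real obstacle here; the only thing to be slightly careful about is recording that $\GO_\PO$-consistency of $\Model'$ (needed to invoke the $\GO_\PO$-dominance relation $<_{\GO_\PO}$, which by Definition~\ref{def:omt} requires both interpretations to satisfy $\phi$) follows from the $\GO_i$-consistency assumed in the \PO-dominance hypothesis, and symmetrically that the $\GO_i$-consistency of $\Model$ gives $\GO_\PO$-consistency on the other side. Once this bookkeeping is done, the argument reduces to a direct definitional unfolding.
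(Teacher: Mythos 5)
Your proof is correct and matches the paper's intent: the paper gives no explicit argument for this proposition, stating only that it "follows straightforwardly from the definitions," and your definitional unfolding (consistency transfers componentwise; a \PO-dominating $\Model'$ would witness $\GO_\PO$-domination via $\prec_\PO$, contradicting optimality) is precisely that straightforward argument, carried out carefully.
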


\noindent
Next, consider a \PO example with two sub-problems: one minimizing the length of a string $w$, and the other maximizing a substring $x$ of $w$ lexicographically.

\begin{example}[\PO]\label{POPLIASTR}
Let $\T$ be the SMT-LIB theory of strings and let
$\GO_1:= \langle \slen(w), \prec_1, \slen(w) < 4 \rangle$ and 
$\GO_2:= \langle x, \prec_2, \cont(w, x)\rangle$, where  $w$, $x$ are
  variables of sort \textsf{Str}, $\prec_1$ is $\prec_\textsf{Int}$, and $\prec_2$ is
$\succ_{\textsf{Str}}$.
Now, let $\GO_\PO =
\GO_\PO(\GO_1, \GO_2)\\=\langle t, \prec_\PO, \slen(w)<4 \wedge \cont(x, w) \rangle$.
Then, $t = \tuple(\slen(w), x)$ and
$(a_1,a_2) \prec_\PO (b_1,b_2)$ iff $a_1 \preccurlyeq_{\textsf{Int}} b_1, \; a_2 \succcurlyeq_{\textsf{str}} b_2$, and $(a_1  \prec_{\textsf{Int}} b_1$ or $a_2 \succ_{\textsf{str}} b_2)$. Now, let $\Model$, $\Model'$, and $\Model''$ be such that: $\Model:=\{w \mapsto \strcon{aba}, x \mapsto \strcon{ab}\}$ and $t^{\Model}:=(3,\strcon{ab})$; $\Model':=\{w \mapsto \strcon{z}, x \mapsto \strcon{z}\}$ and $t^{\Model'}:=(1,\strcon{z})$; and $\Model'':=\{w \mapsto \epsilon, x \mapsto \epsilon\}$ and $t^{\Model}:=(0,\epsilon)$.
Then, $\Model' <_{\GO} \Model$, since $(1, \strcon{z}) \prec_\PO (3, \strcon{ab})$; but both $\Model$ and $\Model'$ are incomparable with $\Model''$.  Both $\Model'$ and $\Model''$ are optimal solutions.
\end{example}

\noindent
\ifarxiv
We can similarly capture the MinMax and MaxMin optimization problems~\cite{OptiMathSAT, patrickthesis} as corresponding instances of Definition~\ref{def:omt} as described below.
\begin{definition}[MinMax]\label{def:MINMAX}
A MinMax optimization problem is a sequence of \gomt problems, $\mathcal{MINMAX} = (\GO_1,\dots,\GO_n)$, where $\GO_i := \langle t_i, \prec, \phi_i\rangle$ for $i \in [1,n]$. For any $\T$-interpretations $\Model$ and $\Model'$, we say that:
\begin{itemize}[leftmargin=0.7em]
    \item $\Model$ $\mathcal{MINMAX}$-\define{dominates} $\Model'$, denoted by $\Model <_{\mathcal{MINMAX}} \Model'$,  if $\Model$ and $\Model'$ are $\GO_i$-consistent for each $i\in[1,n]$, and 
 $t_{max}^{\Model} \prec t_{max}^{\Model'}$ where: 
 \begin{itemize}[leftmargin=3em, align=parleft]
     \item [(i)] $t_{max}^{\Model} = t_i^{\Model}$ for some $i \in [1,n]$, $t_{max}^{\Model'} = t_j^{\Model'}$ for some $j \in [1,n]$; and 
     \item[(ii)] $t_k^{\Model} \preccurlyeq t_{max}^{\Model}$ and $t_k^{\Model'} \preccurlyeq t_{max}^{\Model'}$ for all 
 $k \in [1,n]$
\end{itemize}

    \item $\Model$ is a \define{solution} to $\mathcal{MINMAX}$ iff $\Model$ is $\GO_i$-consistent for each $i$ and no $\Model'$ $\mathcal{MINMAX}$-dominates $\Model$. 
\end{itemize}
\end{definition}
\begin{definition}[$\GO_\mathcal{MINMAX}$]\label{def:OMT-MINMAX}
Given a $\mathcal{MINMAX}$ problem $(\GO_1,\dots,\GO_n)$, we define
 $\GO_\mathcal{MINMAX}(\GO_1,\dots,\GO_n) := \langle t, \prec_{\mathcal{MNMX}}, \phi \rangle$, where:
    \begin{itemize}
        \item $t = \tuple (t_1,\dots,t_n)$;\qquad
        $\phi = \phi_1 \wedge \dots \wedge \phi_n$; 
        \item if $t$ is of sort $\sigma$, then for any $(a_1,\dots,a_n),(b_1,\dots,b_n)\!\in\! \sigma^\T$, \\
        $(a_1,\dots,a_n) \!\prec_{\mathcal{MNMX}}\! (b_1,\dots,b_n)$ iff: 

        $\text{for some }i,j\!\in\![1,n]\!:$
        \begin{itemize}[align=parleft]
        \item[(i)] $a_k \! \preccurlyeq \!a_i$ $\text{\! and \!}$ $b_k \! \preccurlyeq \! b_j \text{ for all } k \!\in\! [1,n]\text{; and \!}$
        \item[(ii)] $a_i \!\prec\! b_j$.
        \end{itemize}
    \end{itemize}
\end{definition}    
\begin{proposition}
    Let $\Model$ be a $\GO_\mathcal{MINMAX}$-solution.  Then $\Model$ is also a solution to the corresponding $\mathcal{MINMAX}$ problem as defined in Definition~\ref{def:MINMAX}.
\end{proposition}

\begin{example}[$\mathcal{MINMAX}$]\label{MINMAXLRA} Let $\GO_1:=\langle a+b+c, \prec_\textsf{Int}, a\! <\! b + c\rangle$ and $\GO_2:= \langle abc, \prec_\textsf{Int}, 0 \leq abc \rangle$, where  $a$,\! $b$,\! $c$ are variables of sort \textsf{Int}.
Now, let $\GO_3=\GO_\mathcal{MINMAX}(\GO_1,\GO_2) = \langle t, \prec_\mathcal{MNMX}, a < b + c \wedge 0 \leq abc  \rangle$, where $t \!:= \!(a+b+c, abc)$ and $(a_1, a_2)\! \prec_\mathcal{MNMX} \!(b_1, b_2)$ iff $\max(a_1,a_2) \prec_{\textsf{Int}} \max(b_1,b_2)$. Let $\Model$, $\Model'$, and $\Model''$ be such that:
$\Model:=\{a \mapsto 2, b \mapsto 1, c \mapsto 3\}$ and $t^{\Model}\!\!:=(6, 6)$;
$\Model'\!\!\!:=\{a \mapsto 1, b \mapsto 0, c \mapsto 4\}$ and $t^{\Model'}\!\!\!:=(5, 0)$; and
$\Model''\!\!:=\!\!\{a \mapsto 0, b \mapsto 1, c \mapsto 4\}$ and $t^{\Model''}:=(5, 0)$.
Then, $\Model'$ and $\Model''$ both $\mathcal{MINMAX}$-dominate  $\Model$, since $(5,0) \prec_\mathcal{MNMX} (6,6)$.  But neither $\Model'$ nor $\Model''$ $\mathcal{MINMAX}$-dominates the other.
\end{example}
\noindent
The $\mathcal{MAXMIN}$ optimization problem is the dual of $\mathcal{MINMAX}$, and it can be defined in a similar way. 

\begin{definition}[MaxMin]\label{def:MAXMIN}
A MaxMin optimization problem is a sequence of \gomt problems, $\mathcal{MAXMIN} = (\GO_1,\dots,\GO_n)$, where $\GO_i := \langle t_i, \prec, \phi_i\rangle$ for $i \in [1,n]$. For any $\T$-interpretations $\Model$ and $\Model'$, we say that:
\begin{itemize}[leftmargin=1em]
    \item $\Model$ $\mathcal{MAXMIN}$-\emph{dominates} $\Model'$, denoted by $\Model <_{\mathcal{MAXMIN}} \Model'$,  if $\Model$ and $\Model'$ are $\GO_i$-\emph{consistent} for each $i\in[1,n]$, and 
    $t_{min}^{\Model'} \prec t_{min}^{\Model}$ where: 
 \begin{itemize}[leftmargin=3em, align=parleft]
    \item[(i)] $t_{min}^{\Model} = t_i^{\Model}$ for some $i \in [1,n]$, $t_{min}^{\Model'} = t_j^{\Model'}$ for some $j \in [1,n]$; and 
    \item[(ii)] $t_{min}^{\Model} \preccurlyeq t_k^{\Model}$ and $t_{min}^{\Model'} \preccurlyeq t_k^{\Model'}$ for all $k \in [1,n]$; 
 \end{itemize}

\item $\Model$ is a \define{solution} to $\mathcal{MAXMIN}$ iff $\Model$ is $\GO_i$-consistent for each $i$ and no $\Model'$ $\mathcal{MAXMIN}$-dominates $\Model$. 
\end{itemize}
\end{definition}

\begin{definition} [$\GO_\mathcal{MAXMIN}$]\label{def:OMT-MAXMIN} Given a $\mathcal{MAXMIN}$ problem $(\GO_1,\dots,\GO_n)$, we define 
$\GO_\mathcal{MAXMIN}(\GO_1,\dots,\GO_n) := \\ \langle t, \prec_{\mathcal{MXMN}}, \phi \rangle$, where:
    \begin{itemize}
        \item $t=\tuple(t_1,\dots,t_n)$; \quad $\phi = \phi_1,\dots,\phi_n$;
        \item if $t$ is of sort $\sigma$, then for any $(a_1,\dots,a_n),(b_1,\dots,b_n)\!\in\! \sigma^\T$, \\
        $(a_1,\dots,a_n) \!\prec_{\mathcal{MXMN}}\! (b_1,\dots,b_n)$ iff: 
        
        $\text{for some }i,j\!\in\![1,n]\!:$
        \begin{itemize}[align=parleft]
            \item [(i)] $a_i \!\preccurlyeq \!a_k \text{ and } b_j \!\preccurlyeq \!b_k  \text{ for all } k \!\in\! [1,n]$;  and
            \item[(ii)] $b_j \!\prec\! a_i$.  
        \end{itemize}
    \end{itemize}
\end{definition} 

\begin{proposition}
    Let $\Model$ be a $\GO_\mathcal{MAXMIN}$-solution.  Then $\Model$ is also a solution to the corresponding $\mathcal{MAXMIN}$ problem as defined in Definition~\ref{def:MAXMIN}.
\end{proposition}

\begin{example}[$\mathcal{MAXMIN}$]\label{MAXMINLRA} Let $\GO_1\!:=\!\langle a+b+c, \prec_\textsf{Int}, a\! <\! b + c\rangle$ and $\GO_2:= \langle abc, \prec_\textsf{Int}, 0 \leq abc \rangle$, where  $a$,\! $b$,\! $c$ are variables of sort \textsf{Int}.
Now, let $\GO_3=\GO_\mathcal{MAXMIN}(\GO_1,\GO_2) = \langle t, \prec_\mathcal{MXMN}, a < b + c \wedge 0 \leq abc  \rangle$, where $t \!:= \!(a+b+c, abc)$ and $(a_1, a_2)\! \prec_\mathcal{MXMN} \!(b_1, b_2)$ iff $\min(b_1,b_2) \prec_{\textsf{Int}} \min(a_1,a_2)$. Let $\Model$, $\Model'$, and $\Model''$ be such that:
$\Model:=\{a \mapsto 2, b \mapsto 1, c \mapsto 3\}$ and $t^{\Model}\!\!:=(6, 6)$;
$\Model'\!\!\!:=\{a \mapsto 1, b \mapsto 0, c \mapsto 4\}$ and $t^{\Model'}\!\!\!:=(5, 0)$; and
$\Model''\!\!:=\!\!\{a \mapsto 0, b \mapsto 1, c \mapsto 4\}$ and $t^{\Model''}:=(5, 0)$.
Then, $\Model$ $\mathcal{MAXMIN}$-dominates both $\Model'$ and $\Model''$, since $(6,6) \prec_\mathcal{MXMN} (5,0)$.  But neither $\Model'$ nor $\Model''$ $\mathcal{MAXMIN}$-dominates the other.
\end{example}

\else
Though we omit them for space reasons, we can similarly capture the MinMax and MaxMin optimization problems~\cite{OptiMathSAT, patrickthesis} as corresponding $\GO_\mathcal{MINMAX}$ and $\GO_\mathcal{MAXMIN}$ instances of Definition~\ref{def:omt}.\footnote{
Details of these formulations can be found in the longer version of this paper~\cite{GOMT}.
}
\fi

%

\ifarxiv
Note that except for degenerate cases, the orders $\prec_\mathcal{MNMX}$, $\prec_\mathcal{MXMN}$, as well as the order $\prec_\PO$ above, are always partial orders.  
\else
Note that except for degenerate cases, the orders used for MinMax and MaxMin, as well as the order $\prec_\PO$ above, are always  partial orders.  
\fi
Being able to
model these multi-objective optimization problems in a clean and simple way is
a main motivation for using a partial instead of a total order in Definition~\ref{def:omt}.

Another problem in the literature is the \emph{multiple-independent} (or \emph{boxed}) optimization problem~\cite{nuZ14, nuZ15, patrickthesis}.  It simultaneously solves several independent \gomt problems.  We show how to realize this as a single \GO instance.

\begin{definition}[Boxed Optimization (BO)] \label{def:BOX}
    A boxed optimization problem is a sequence of \gomt problems, $\BO = (\GO_1,\dots,\GO_n)$, where $\GO_i := \langle t_i, \prec_i,\\ \phi_i\rangle$ for $i \in [1,n]$. We say that:
\begin{itemize}[leftmargin=0.7em]
    \item A sequence of interpretations ${(\Model_1,\dots,\Model_n)}$ \BO-\define{dominates} $(\Model'_1,\dots,\Model'_n)$, denoted by $(\Model_1,\dots,\Model_n) <_{\BO} (\Model'_1,\dots,\Model'_n)$, if $\Model_i$ and $\Model'_i$ are $\GO_i$-\emph{consistent} or each $i\in[1,n]$,
and:
\begin{itemize}[align=parleft]
\item[(i)] $t_i^{\Model_i} \preccurlyeq_i t_i^{\Model'_i}$ for all $i \in [1,n]$; and
\item[(ii)] for some $j \! \in \! [1,n]$,
        $t_j^{\Model_j} \prec_j t_j^{\Model'_j}$.
\end{itemize}
\item $(\Model_1,\dots,\Model_n)$ is a \define{solution} to \BO iff $\Model_i$ is $\GO_i$-consistent for each $i\in[1,n]$ and no $(\Model'_1,\dots,\Model'_n)$ \BO-dominates $(\Model_1,\dots,\Model_n)$.
\end{itemize}        
\end{definition}

\noindent
Note that in previous work, there is an additional assumption that $\phi_i =
\phi_j$ for all $i,j\in[1,n]$.  Below, we show how to solve the more general case without this assumption.
We first observe that the above definition closely resembles
Definition~\ref{def:PO} for Pareto optimization (PO) problems.
Leveraging this similarity, we show
how to transform an instance of a BO problem into a PO problem.
\begin{definition}\label{def:OMT-BOX}
    ($\GO_\BO$) 
Let $\BO = (\GO_1,\dots,\GO_n)$, where
$\GO_i := \langle t_i, \prec_i, \phi_i\rangle$ for $i \in [1,n]$.
Let $V_i$ be the set of all free variables in the
$i^{th}$ sub-problem that also appear in at least one other sub-problem:
\[V_i = (\FV(t_i)\cup\FV(\phi_i)) \;\cap \bigcup_{j\in[1,n],j \neq i}
\FV(t_j)\cup\FV(\phi_j).\]
Let
$\vect{v_i}=(v_{i,1},\dots,v_{i,m})$ be some ordering of the variables in $V_i$
(say, by $\prec_\mathcal{X}$), and for each $j\in[1,m]$, let $v'_{i,j}$ be a
fresh variable of the same sort as $v_{i,j}$, and let
$\vect{v'_i}=(v'_{i,1},\dots,v'_{i,m})$.  Then, let $t'_i =
t_i[\vect{v_i} \mapsfrom \vect{v'_i}]$, $\phi'_i =
\phi_i[\vect{v_i} \mapsfrom \vect{v'_i}]$, and $\GO'_i =
\langle t'_i, \prec_i, \phi'_i\rangle$.  Then we define
$\GO_\BO:=\GO_\PO(\GO'_1,
\dots,\GO'_n)$.
\end{definition}
%
\begin{proposition}
    Let $\Model$ be a solution to $\GO_\BO$ as defined in Definition~\ref{def:OMT-BOX}.  Then $(\Model_1,\dots,\Model_n)$ is a solution to the corresponding \BO problem as defined in Definition~\ref{def:BOX}, where for each $i\in[1,n]$, $\Model_i$ is the same as $\Model$ except that each variable $v_{i,j}\in V_i$ is interpreted as $(v'_{i,j})^\Model$.
\end{proposition}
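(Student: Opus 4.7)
The plan is to verify the two conditions for $(\Model_1,\dots,\Model_n)$ being a solution to $\BO$: (i) each $\Model_i$ is $\GO_i$-consistent, and (ii) no sequence $(\Model^*_1,\dots,\Model^*_n)$ $\BO$-dominates it. Part (i) will follow directly from the semantics of substitution, while part (ii) will be proved by contrapositive: assuming a $\BO$-dominator exists, I will glue its components into a single $\T$-interpretation that $\PO$-dominates $\Model$ in $\GO_\BO$, contradicting the assumption that $\Model$ is a $\GO_\BO$-solution.

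A preliminary observation that unlocks both parts: for any $\T$-interpretation $\Model^*$, writing $\widehat{\Model^*_i}$ for the interpretation agreeing with $\Model^*$ except that each $v_{i,j}\in V_i$ is reinterpreted as $(v'_{i,j})^{\Model^*}$, the freshness of $\vect{v'_i}$ together with the substitution lemma gives $(t'_i)^{\Model^*}=t_i^{\widehat{\Model^*_i}}$ and $\Model^*\models\phi'_i \Leftrightarrow \widehat{\Model^*_i}\models\phi_i$. Applied to $\Model$ itself, this identifies $\widehat{\Model_i}$ with the $\Model_i$ from the proposition's statement. Since $\Model$ is a $\GO_\BO$-solution, $\Model\models\phi=\phi'_1\wedge\cdots\wedge\phi'_n$, so each $\Model_i\models\phi_i$, establishing (i), and moreover $t_i^{\Model_i}=(t'_i)^\Model$ for every $i$.

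For (ii), suppose toward contradiction that some $(\Model^*_1,\dots,\Model^*_n)$ $\BO$-dominates $(\Model_1,\dots,\Model_n)$. I will define a single interpretation $\Model^*$ as follows: for each $i$, on the local variables $\FV(t_i)\cup\FV(\phi_i)\setminus V_i$ of $\GO_i$, let $\Model^*$ agree with $\Model^*_i$; on each fresh variable $v'_{i,j}$, set $\Model^*(v'_{i,j})=\Model^*_i(v_{i,j})$; on every remaining variable (including the now-unused originals $v_{i,j}\in V_i$), let $\Model^*$ agree with $\Model$. This is well-defined: local variables of distinct sub-problems are disjoint by definition of $V_i$, and the fresh blocks $\vect{v'_i}$ are pairwise disjoint by construction.

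Applying the preliminary observation to $\Model^*$, we get $(t'_i)^{\Model^*}=t_i^{\widehat{\Model^*}_i}$ and $\Model^*\models\phi'_i\Leftrightarrow\widehat{\Model^*}_i\models\phi_i$. By the choice of $\Model^*$ on local and fresh variables, $\widehat{\Model^*}_i$ coincides with $\Model^*_i$ on every variable of $\FV(t_i)\cup\FV(\phi_i)$, so $t_i^{\widehat{\Model^*}_i}=t_i^{\Model^*_i}$ and $\widehat{\Model^*}_i\models\phi_i$. Hence $\Model^*\models\phi'_i$ for all $i$ (so $\Model^*\models\phi$), and the $\BO$-domination inequalities $t_i^{\Model^*_i}\preccurlyeq_i t_i^{\Model_i}$ (with strict inequality for some $j$) translate directly into $(t'_i)^{\Model^*}\preccurlyeq_i (t'_i)^\Model$ (strict at $j$). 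By Definition~\ref{def:OMT-PO}, this means $t^{\Model^*}\prec_\PO t^\Model$, so $\Model^*<_{\GO_\BO}\Model$, contradicting the optimality of $\Model$.

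The main obstacle is bookkeeping: one must verify that the four classes of variables used to build $\Model^*$ (local-to-$\GO_i$, fresh $\vect{v'_i}$, unused original shared, and others) partition $\mathcal{X}$ without conflict, and that the substitution lemma applies cleanly because the fresh variables do not occur in any $t_i$ or $\phi_i$. Once that is in hand, the semantic correspondence $(t'_i)^{\Model^*}=t_i^{\Model^*_i}$ reduces the proof to matching Definitions~\ref{def:BOX} and~\ref{def:PO} term by term.
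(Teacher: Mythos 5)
The paper does not actually prove this proposition: it is one of the correctness propositions asserted to ``follow straightforwardly from the definitions,'' and no argument for it appears in the body or the appendix. Your write-up supplies exactly the argument the authors are gesturing at, and it is correct. The two key steps --- (a) the substitution lemma identifying $(t'_i)^{\Model}$ with $t_i^{\Model_i}$ and $\Model\models\phi'_i$ with $\Model_i\models\phi_i$, and (b) the gluing of a hypothetical $\BO$-dominator $(\Model^*_1,\dots,\Model^*_n)$ into a single interpretation $\Model^*$ that $\prec_\PO$-dominates $\Model$ --- are the right ones, and your bookkeeping for (b) is sound: the sets $\bigl(\FV(t_i)\cup\FV(\phi_i)\bigr)\setminus V_i$ are pairwise disjoint precisely by the definition of $V_i$, the fresh blocks $\vect{v'_i}$ are pairwise disjoint by construction, and $\widehat{\Model^*}_i$ then agrees with $\Model^*_i$ on all of $\FV(t_i)\cup\FV(\phi_i)$, so the $\BO$-domination inequalities of Definition~\ref{def:BOX} transfer verbatim to the conditions of Definition~\ref{def:OMT-PO}.

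One caveat worth flagging, though it is inherited from the paper's definitions rather than introduced by you: your construction of $\Model^*$ only reassigns \emph{variables}, so it is well defined only under the implicit assumption that $\Model,\Model^*_1,\dots,\Model^*_n$ share a common underlying $\Sigma$-structure (common domains and a common interpretation of all non-variable symbols). The paper tacitly makes this assumption --- e.g., $\prec_\PO$ is declared as a relation on a single set $\sigma^\T$, and cross-interpretation comparisons such as $t_i^{\Model_i}\preccurlyeq_i t_i^{\Model'_i}$ in Definition~\ref{def:BOX} already presuppose it --- but if $\Sigma$ contained an uninterpreted function symbol occurring in two different sub-problems, Definition~\ref{def:OMT-BOX} renames only shared free variables, the gluing step would break, and in fact the proposition itself can fail. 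A sentence acknowledging this standing assumption would make your proof airtight; otherwise it stands as a complete proof of a claim the paper leaves unproved.
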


\noindent
In practice, solvers for BO
problems can be implemented without variable renaming (see, e.g.,
\cite{Li2014SymbolicOW,nuZ14,OMTLAcosts}).  Variable renaming, while a useful
theoretical construct, also adds generality to our definition of $\BO$.  An interesting direction for future experimental work would
be to compare the two approaches in practice.
\medskip

\noindent
\textbf{Compositional Optimization} \; 
\gomt problems can also be combined by functional composition of multiple objective terms, possibly of different sorts, yielding \emph{compositional optimization problems}~\cite{TESO2017166, CimattiEtAl2013,  patrickthesis}.
  Our framework handles them naturally by simply constructing an objective term capturing the desired compositional
relationship.
For example, compositional objectives can address the (partial) \text{MaxSMT} problem~\cite{patrickthesis}, where some formulas are \emph{hard} constraints and others are \emph{soft} constraints.  The goal is to satisfy all hard constraints and as many soft constraints as possible.  The next example is inspired by Cimatti et al.~\cite{CimattiEtAl2013} and Teso et al.~\cite{TESO2017166}.

\begin{example}[\textsc{MaxSMT\!}]\label{COPBLRA}
Let $x  \!\geq \! 0$ and $y \! \geq \! 0$ be hard constraints and
$4x+y-4  \!\geq\!  0$ and $2x+3y-6  \!\geq \! 0$ soft constraints. We can formalize this as $\GO_\CO=\langle t, \prec, \phi
\rangle$, where: $t=\ite(4x+y-4 \! \geq \! 0,0,1) + \ite(2x+3y-6 \! \geq \!  0,0,1)$,   $\prec\ \equiv\ \prec_\textsf{Int}$, and $\phi = x  \!\geq\!  0 \wedge y \! \geq \! 0$. An optimal solution must satisfy both hard constraints and, by minimizing the objective term $t$, as many soft constraints as possible.
\end{example}
%
\textsc{MaxSMT} has various variants including generalized, partial, weighted, and partial weighted \textsc{MaxSMT}~\cite{patrickthesis}, all of which our framework can handle similarly.

Next, we show a different compositional example that combines two different orders, one on strings and the other on integers.  This example also illustrates a theory combination not present in the OMT literature.

\begin{example}[Composition of $\textsf{Str}$ and $\textsf{Int}$]\label{COSTRLIA}
Let $\T$ be again the theory of strings\footnote{The SMT-LIB theory of strings includes the theory of integers to support constraints over string length.} 
Let $\GO_\CO = \langle \tuple(x,\slen(x))$, $\prec$, $\cont(x, \text{\strcon{a}}) \wedge \slen(x)>1\rangle$, where $x$ is of sort $\textsf{Str}$ and 
$(a_1,b_1) \prec (a_2,b_2)$ iff $b_1 \prec_{\textsf{Int}} b_2$ or $(b_1 = b_2$ and $a_1 \succ_{\textsf{str}} a_2)$. $\prec$ prioritizes minimizing the length, but then maximizes the string with respect to lexicographic order. An optimal solution must interpret $x$ as the string \strcon{za} of length 2 since $x$ must be of length at least 2 and contain \strcon{a}, making \strcon{za} the largest string of minimum length.
\end{example}

Based on the definitions given in this section, we see that our
formalism can capture any combination of 
 \GO (including compositional), $\GO_\mathcal{LO}$, $\GO_\PO$,
$\GO_\mathcal{MINMAX}$, $\GO_\mathcal{MAXMIN}$, and
$\GO_\BO$ problems.  And note that the last four all make use of the partial order feature of Definition~\ref{def:omt}.


\section{The \gomt Calculus}
\label{section:derivation_rules}

We introduce a calculus for solving the \gomt problem, presented 
as a set of \emph{derivation rules}. 
We fix a \gomt problem $\GO = \langle t, \prec, \phi \rangle$
where $\phi$ is satisfiable (optimizing does not make sense otherwise).
We start with a few definitions.
\begin{definition}[State]
A \emph{state} is a tuple $\Psi = \langle \Model, \Delta, \tau \rangle$, where $\Model$ is an interpretation,  $\Delta$ is a formula,\! and $\tau$ is a sequence of formulas.
\end{definition}

\noindent
The set of all states forms the state space for the \gomt problem. 
Intuitively, the proof procedure of the calculus is a search procedure over this state space
which maintains at all times a \emph{current state}  
$\langle \Model, \Delta, \tau \rangle$ storing a candidate solution and
additional search information.
In the current state,
$\Model$ is the best solution found so far in the search;  
$\Delta$ is a formula describing the remaining, yet unexplored, part of the state space, 
where a better solution might exist; and
$\tau$ contains formulas that divide up the search space described by $\Delta$
into \emph{branches} represented by the individual formulas in $\tau$,
maintaining the invariant that 
the disjunction of all the formulas $\tau_1, \ldots, \tau_p$ 
in $\tau$ is equivalent to $\Delta$ modulo $\phi$, 
that is, $\phi \models (\bigvee_{i=1}^p \tau_i \liff \Delta)$. 

Note that states contain $\T$-interpretations,
which are possibly infinite mathematical structures.  This is useful to keep the calculus simple. 
In practice,
it is enough just to keep track of the interpretations of the (finitely-many)
symbols without fixed meanings (variables and uninterpreted functions and sorts) appearing in the state, much as SMT solvers do in order to produce models. 

\begin{definition}[Solve] $\textsc{Solve}$ is a function that takes a formula and returns a satisfying interpretation if the formula is satisfiable and a distinguished value $\bot$ otherwise.
\end{definition}

\begin{definition}[Better]\label{def:better}
$\better_\GO$ is a function that takes a \GO-consistent interpretation $\Model$ and returns a formula $\better_\GO(\Model)$ with the property that for every \GO-consistent interpretation $\Model'$,
\[ \Model' \models \better_\GO(\Model) \text{\ iff\ \,} \Model' <_{\GO} \Model.\]
\end{definition}
\noindent
The function above is specific to the given optimization problem \GO
or, put differently, is parametrized by $t$, $\prec$, and $\phi$.
When \GO is clear, however, we simply write $\better$, 
for conciseness.

The calculus relies on the existence and computability 
of $\textsc{Solve}$ and $\better$.  $\textsc{Solve}$ can be realized by any standard SMT solver.  $\better$ relies on a defining formula for $\prec$ as discussed below. We note that intuitively, $\better(\Model)$ is simply a (possibly unsatisfiable) formula characterizing the solutions of $\phi$ that are better than $\Model$.  Assuming $\alpha_{\prec}$ is the formula defining $\prec$, with free variables $v_1 \prec_\mathcal{X} v_2$, if the value $t^\Model$ can be represented by some constant $c$ (e.g., if $t^\Model$ is a rational number), then $\better(\Model) = \alpha_{\prec}[(v_1,v_2)\mapsfrom(t,c)]$ satisfies Definition~\ref{def:better}.  On the other hand, it could be that $t^\Model$ is not representable as a constant (e.g., it could be an algebraic real number); then, a more sophisticated formula (involving, say, a polynomial and an interval specifying a particular root) may be required.

\begin{definition}[Initial State]
The \emph{initial  state} of the \gomt problem 
$\GO = \langle t, \prec, \phi \rangle$ is 
$\langle \Model_0, \Delta_0, \tau_0\rangle$, where $\Model_0 = \textsc{Solve}(\phi)$, $\Delta_0 = \better(\Model_0)$, $\tau_0 = (\Delta_0)$.
\end{definition}
\noindent

\noindent
Note that $\Model_0 \not= \bot$ since we assume that $\phi$ is satisfiable.
The search for an optimal solution to the \gomt problem in our calculus starts
with an arbitrary solution of the constraint $\phi$ and continues until
it finds an optimal one.

\begin{figure}[t]
    \centering
\begin{gather*}
\inferrule*[Left=F-Split]{
   \tau \not= \emptyset 
   \quad \psi = \textsc{Top}(\tau)
 \quad \phi \models \psi \liff \bigvee_{j=1}^k \psi_j,\; k \geq 1 
}{
   \tau := (\psi_1,\dots,\psi_k)\circ\textsc{Pop}(\tau)
}
\\[1ex]
\inferrule*[Left=F-Sat]{
   \tau \neq \emptyset
   \quad \psi = \textsc{Top}(\tau)
   \quad \textsc{Solve}(\phi\wedge\psi) = \Model' \quad \Model' \not=\bot \quad \Delta' = \Delta \wedge \better(\Model') 
}{
   \Model := \Model', \; 
   \Delta:= \Delta', \; 
   \tau := (\Delta')
}
\\[1ex]
\inferrule*[Left=F-Close]{
   \tau \neq \emptyset \quad \psi = \textsc{Top}(\tau) \quad \textsc{Solve}(\phi \wedge \psi) = \bot
}{
   \Delta := \Delta \wedge \neg \psi, \; 
   \tau := \textsc{Pop}(\tau)
}
\end{gather*}
\caption{The derivation rules of the \gomt Calculus.}
\label{rules}
\end{figure}

\subsection{Derivation Rules} 
\label{basic_derivation_rules}

Figure~\ref{rules} presents the derivation rules of the \gomt calculus.  The rules are given in guarded assignment form, where the rule premises describe the conditions on the current state that must hold for the rule to apply, and the conclusion describes the resulting modifications to the state.
State components not mentioned in the conclusion of a rule are unchanged.

A derivation rule \define{applies} to a state if 
$(i)$ the conditions in the premise are satisfied by the state and
$(ii)$ the resulting state is different.
A state is \define{saturated} if no rules apply to it.
A \GO-\emph{derivation} is a sequence of states, possibly infinite, where the first state is the initial state of the \gomt problem \GO, and each state in the sequence is obtained by applying one of the rules to the previous state.  The \define{solution sequence} of a derivation is the sequence made up of the solutions (i.e., the interpretations) in each state of the derivation. 

The calculus starts with a solution for $\phi$
and improves on it until an optimal solution is found.
During a derivation, the best solution found so far is maintained
in the $\Model$ component of the current state.
A search for a better solution can be organized into branches through the use of the \textsc{F-Split} rule.
Progress toward a better solution is enforced by the formula $\Delta$ 
which, by construction, is falsified by all the solutions found so far.
We elaborate on the individual rules next.
\medskip

\noindent
\textbf{$\textsc{F-Split}$} \,
$\textsc{F-Split}$ divides the branch of the search space represented 
by the top formula $\psi = \textsc{Top}(\tau)$ in $\tau$ 
into $k$ sub-branches $(\psi_1, \dots,\psi_k)$, ensuring their disjunction is equivalent to $\psi$ modulo the constraint $\phi$: 
$\phi \models \psi \liff \bigvee_{j=1}^k \psi_j$.
The rest of the state remains unchanged. $\textsc{F-Split}$ 
is applicable whenever $\tau$ is non-empty.
The rule does not specify how the formulas $\psi_1, \ldots, \psi_k$ are chosen. 
However, a pragmatic implementation should aim to generate them so that 
they are \emph{irredundant}
in the sense that no formula is entailed 
modulo $\phi$ by the (disjunction of the) other formulas.
This way, 
each branch potentially contains a solution that the others do not.
Note, however, that this is not a requirement.

\medskip
\noindent
\textbf{$\textsc{F-Sat}$} \,The $\textsc{F-Sat}$ rule applies when there is a solution in the branch represented by the top formula $\psi$ in $\tau$. 
The rule selects a solution $\Model' = \textsc{Solve}(\phi \wedge \psi)$ from that branch. One can prove that, by the way the formulas in $\tau$ are generated in the calculus, $\Model'$ necessarily improves on the current solution $\Model$,
moving the search closer to an optimal solution.\footnote{
\ifarxiv
See Lemma~\ref{lemma_F-Sat_better} in Appendix~\ref{appendix:proofs}.
\else
See Lemma 6 in Appendix B of a longer version of this paper~\cite{GOMT}.
\fi} 
Thus, $\textsc{F-Sat}$ switches to the new solution (with $\Model:=\Model'$) and 
directs the search 
to seek an even better solution 
by updating $\Delta$ to $\Delta' = \Delta\ \wedge\ \textsc{Better}(\Model')$.
Note that $\textsc{F-Sat}$
resets $\tau$ to the singleton sequence $(\Delta')$, 
discarding any formulas in $\tau$.
This is justified, as any discarded better solutions must also be in the space defined by $\Delta'$.
\medskip

\noindent
\textbf{$\textsc{F-Close}$} \,
The $\textsc{F-Close}$ rule eliminates the first element $\psi$ of a non-empty $\tau$
if the corresponding branch contains
no solutions (i.e., $\textsc{Solve}(\phi \wedge \psi) = \bot$).
The rule further updates the state by adding the negation of $\psi$ to $\Delta$ 
as a way to eliminate from further consideration the interpretations satisfying $\psi$.
\medskip

\noindent
Note that rules $\textsc{F-Sat}$ and $\textsc{F-Close}$ both update $\Delta$ to reflect the remaining search space, whereas $\textsc{F-Split}$ refines the division of the current search space.

\subsection{Search strategies}
\label{search_strategies}

The \gomt calculus provides the flexibility 
to support different search strategies.
Here, we give some examples, including both notable strategies from the OMT literature as well as new strategies enabled by the calculus, and explain how they work at a conceptual level.
\medskip

\noindent
\textbf{Divergence of strategies:}
The strategies discussed below, with the exception of Hybrid search, may diverge if an optimal solution does not exist
or if there is a \emph{Zeno-style}~\cite{PB-OMT12,Sebastiani:2015:OMT:2737801.2699915} infinite chain 
of increasingly better solutions, all dominated by an optimal one.  We discuss these issues and termination in general in Section~\ref{section:proofs:basic}.

\vspace{.2cm}
\noindent
\textbf{Linear search:}
A linear search strategy is obtained by never using the $\textsc{F-Split}$ rule. Instead, the $\textsc{F-Sat}$ rule is applied to completion
(that is, repeatedly until it no longer applies).
As we show later (see Theorem~\ref{theorem_basic_termination}), in the absence of Zeno chains, $\tau$ eventually becomes empty, terminating the search.
At that point, $\Model$ is guaranteed to be an optimal solution.
\medskip

\noindent
\textbf{Binary search:} 
A binary search strategy is achieved by using 
the $\textsc{F-Split}$ rule to split the search space represented 
by $\psi = \textsc{Top}(\tau)$ into two subspaces, 
represented by two formulas $\psi_1$ and $\psi_2$, 
with $\phi \models \psi \liff (\psi_1 \vee \psi_2)$.
In a strict binary search strategy, $\psi_1$ and $\psi_2$ should be chosen so that the two subspaces are disjoint and, to the extent possible, of equal size.
A typical binary strategy alternates applications of \textsc{F-Split}
with applications of either \textsc{F-Sat} or \textsc{F-Close}
until $\tau$ becomes empty,
at which point $\Model$ is guaranteed to be an optimal solution.
A smart strategy would aim to find an optimal solution as soon as possible by arranging for solutions in $\psi_1$ (which will be checked first) to be better than solutions in $\psi_2$, if this is easy to determine.
Note that an unfortunate choice of $\psi_1$ by \textsc{F-Split},
containing no solutions at all, is quickly remedied by an application
of \textsc{F-Close} which removes $\psi_1$, allowing $\psi_2$ to be considered next.  The same problem of Zeno-style infinite chains can occur in this strategy.
\medskip

\noindent
\textbf{Multi-directional exploration:} For multi-objective optimization problems, a search strategy can be defined to simultaneously direct the search space towards any or all objectives. Formally, if $n$ is the number of objectives, then the $\textsc{F-Split}$ rule can be instantiated in such a way that $\psi_j = \bigwedge_{i=1}^n \psi_{ji}$, where  $\psi_{ji}$ is a formula describing a part of the search space for the $i^{th}$ objective term in the $j^{th}$ branch.

\vspace{.2cm}
\noindent \textbf{Search order:} We formalize $\tau$ as a sequence to enforce exploring the branches in $\tau$ in a specific order, assuming such an order can be determined at the time of applying $\textsc{F-Split}$. Often, this is the case. For example, in binary search, it is typically best to explore the section of the search space with better objective values first. If a solution is found in this section, a larger portion of the search space is pruned. Conversely, if the branches are explored in another order, even finding a solution necessitates continued exploration of the space corresponding to the remaining branches.

Alternatively, $\tau$ can be implemented as a set, by redefining the \textsc{Top} and \textsc{Pop} functions accordingly to select and remove a desired element in $\tau$. With $\tau$ defined as a set, additional search strategies are possibile, including parallel exploration of the search space and the ability to arbitrarily switch between branches.

\medskip

\noindent
\textbf{Hybrid search:} 
For some objectives and orders, there exist off-the-shelf external optimization procedures (e.g., Simplex for linear real arithmetic). One way to integrate such a procedure into our calculus is 
to replace a call to the $\textsc{Solve}$ function in $\textsc{F-Sat}$
with a call to an external optimization procedure 
\textsc{Optimize} that is sort- and order-compatible with the \gomt problem.
We pass to $\textsc{Optimize}$ as parameters the constraint
$\phi \wedge \textsc{Top}(\tau)$ and the objective $t$
and obtain an optimal solution in the current branch 
$\textsc{Top}(\tau)$.\footnote{This assumes there exists an optimal solution in the current branch.  If not (i.e., if the problem is unbounded), a suitable error can be raised and the search terminated.}
The call can be viewed as an accelerator for a linear search 
on the current branch.
This approach incorporates theory-specific optimization solvers in much the same way as is done in the OMT literature. However, our calculus extends previous approaches with the ability to blend theory-specific optimization with theory-agnostic optimization 
by interleaving applications of \textsc{F-Sat} using \textsc{Solve} 
with applications using \textsc{Optimize}.
For example, we may want to alternate
between expensive calls to an external optimization solver and calls to a standard solver that are guided by a custom branching heuristic.
\medskip

\noindent
\textbf{Other strategies:} 
The calculus enables us to mix and match the above strategies arbitrarily, as well as to model other notable search techniques like cutting planes~\cite{SRI:simplex:dpllt} by integrating a cut formula into $\better$. And, of course, one advantage of an abstract calculus is that its generality provides a framework for the exploration of new strategies. Such an exploration is a promising direction for future work.

\subsection{New Applications}
\label{new_applications}

A key feature of our framework is that it is theory-agnostic, that is, it can be used with any SMT theory or combination of theories.  This is in contrast to most of the \gomt literature in which a specific theory is targeted.  It also fully supports arbitrary composition of \gomt problems using the multi-objective approaches described in Section~\ref{section:multi}.  Thus, our framework enables \gomt to be extended to new application areas requiring either combinations of theories or multi-objective formulations that are unsupported by previous approaches.
We illustrate this (and the calculus itself) using
a Pareto optimization problem over the theories of strings and integers (a combination of theories and objectives unsupported by any existing OMT approach or solver).
\begin{example}[$\GO_\mathcal{PO}$]\label{derivation_example_pareto}
Let $\GO_1:= \langle \slen(w), \prec_1, \slen(s) < \slen(w) \rangle$ and 
$\GO_2:= \langle x, \prec_2, x = s \cdot w \cdot s\rangle$, where $w$, $x$, $s$ are of sort $\textsf{Str}$, $\slen(w)$ and $\slen(s)$ are of sort $\textsf{Int}$, $\prec_1 \ \equiv \ \prec_{\textsf{Int}}$, and $\prec_2 \ \equiv \ \succ_{\textsf{Str}}$;. 
Then, let $\GO_\mathcal{PO}(\GO_1,\GO_2):=\langle t,
 \prec_\mathcal{PO}, \phi \rangle$, where $t$ is $\tuple(\slen(w), x)$, $\phi$ is $x = s \cdot w \cdot s \ \wedge\  \slen(s) < \slen(w)$, and $(a_1,a_2)  \prec_\mathcal{PO}  (b_1,b_2)$ iff $a_1  \preccurlyeq_1  b_1, \; a_2  \preccurlyeq_2  b_2$, and either $a_1 
  \prec_1  b_1$ or $a_2  \prec_2  b_2$ or both.
Suppose initially:
\[
\begin{array}{rcl}

\Model_0 & =\ & \{x \mapsto \strcon{aabaa}, s \mapsto \strcon{a}, w \mapsto \strcon{aba}, \},\quad \tau_0 =\  (\Delta_0),\\
\Delta_0 &=\ &(\slen(w) \leq 3 \wedge x >_{\textsf{str}} \strcon{aabaa}) \vee (\slen(w) < 3 \wedge x \geq_{\textsf{str}} \strcon{aabaa}).
\end{array}
\]
The initial objective term value is $(3,\strcon{aabaa})$.

\begin{enumerate}
\item We can first apply \textsc{F-Split} to split the top-level disjunction in $\tau$.  And suppose we want to work on the second disjunct first.  This results in:
\[
\begin{array}{rcl}
\tau_1 & =\  & ( \slen(w) < 3 \wedge x \geq_{\textsf{str}} \strcon{aabaa},\slen(w) \leq 3 \wedge x >_{\textsf{str}} \strcon{aabaa})
\end{array}
\]
\noindent
while the other elements of the state are unchanged.

\item Now, suppose we want to do binary search on the length objective.  This can be done by again applying the $\textsc{F-Split}$ rule 
with the disjunction 
\((\slen(w) < 2 \wedge x \geq_{\textsf{str}} \strcon{aabaa}) 
  \lor 
  (2 \leq \slen(w) < 3 \wedge x \geq_{\textsf{str}} \strcon{aabaa})
\)
to get:
\[
\begin{array}{rcl}
    \tau_2 & =\ & ( \slen(w) < 2 \wedge x \geq_{\textsf{str}} \strcon{aabaa},2\le\slen(w) < 3 \wedge x \geq_{\textsf{str}} \strcon{aabaa},\\
&&\ \slen(w) \leq 3 \wedge x >_{\textsf{str}} \strcon{aabaa}).
\end{array}
\]
\noindent

\item
Both $\textsc{F-Split}$ and $\textsc{F-Sat}$ are applicable, but we follow the strategy of applying \textsc{F-Sat} after a split.  Suppose we get the new solution $\Model' = \{x \mapsto \strcon{b}, s \mapsto \epsilon, w \mapsto \strcon{b}\}$.  Then we have:
\[
\begin{array}{rcl}
   \Model_3 & = & \{x \mapsto \strcon{b}, s \mapsto \epsilon, w \mapsto \strcon{b}\},\quad \tau_3  =\ (\Delta_3),\\
    \Delta_3 & =\ & (\slen(w) \leq 1 \wedge x >_{\textsf{str}} \strcon{b})  \vee (\slen(w) < 1 \wedge x \geq_{\textsf{str}} \strcon{b}).
\end{array}
\]

\item Both $\textsc{F-Split}$ and $\textsc{F-Sat}$ are again applicable. Suppose that we switch now to linear search and thus again apply $\textsc{F-Sat}$, and suppose the new solution is $\Model' = \{x \mapsto \strcon{z}, s \mapsto \epsilon, w \mapsto \strcon{z}\}$.  This brings us to the state:
\[
\begin{array}{rcl}
   \Model_4 & =\ & \{x \mapsto \strcon{z}, s \mapsto \epsilon, w \mapsto \strcon{z}\},\quad \tau_4  =\ (\Delta_4),\\
    \Delta_4 & =\ & (\slen(w) \leq 1 \wedge x >_{\textsf{str}} \strcon{z})  \vee (\slen(w) < 1 \wedge x \geq_{\textsf{str}} \strcon{z}).
\end{array}
\]

    \item Now, $\textsc{Solve}(\phi \wedge ((\slen(w) \leq 1 \wedge x >_{\textsf{str}} \strcon{z}) \vee (\slen(w) < 1 \wedge x \geq_{\textsf{str}} \strcon{z}))) = \bot$. Indeed, $\slen(w) \neq 0$, since $0 \leq \slen (s) < \slen(w)$;
    if $\slen(w) = 1$, then $\slen(s) = 0$ and $\slen(x) = 1$, thus, $x \not >_{\textsf{str}} \strcon{z}$. Now $\textsc{F-Close}$ can derive the state:
    \[\langle \Model_5, \Delta_5, \tau_5\rangle = \langle \Model_4, \Delta_4 \wedge \neg\Delta_4, \emptyset \rangle\]

\item We have reached a saturated state, and $\Model_5$ is
  a Pareto optimal solution. 
  \qed 
\end{enumerate}
\end{example}

Optimization of objectives involving strings and integers (or strings and bitvectors) could be especially useful in security applications such as those mentioned in~\cite{Subramanian2017}.  Optimization could be used in such applications to ensure that a counter-example is as simple as possible, for example.

Examples of multi-objective problems unsupported by existing solvers include multiple Pareto problems with a single min/max query, Pareto-lexicographic multi-objective optimization, and single Pareto queries involving MinMax and MaxMin optimization (see, for example, \cite{inventions7030046,Lai2022,akinlanathesis}). Our framework offers immediate solutions to these problems.

As has repeatedly been the case in SMT research, when new capabilities are introduced, new applications emerge. We expect that will happen also for the new capabilities introduced in this paper. One possible application is the optimization of emerging technology circuit designs~\cite{delayconv}.

\subsection{Correctness}
\label{section:proofs:basic}

In this section, we establish correctness properties for \GO-derivations. 
Initially, we demonstrate that upon reaching a saturated state, 
the interpretation $\Model$ in that state is optimal.%
\footnote{%
\ifarxiv
Full proofs for the theorems in this section can be found in Appendix~\ref{appendix:proofs}. 
\else
Full proofs for the theorems in this section can be found in a longer version of this paper~\cite{GOMT}.
\fi
}
%
\begin{theorem} (Solution Soundness) 
\label{thm_basic_solution_soundness}
Let $\langle \Model, \Delta, \tau \rangle$ be a saturated state 
in a derivation for a \gomt problem \GO.
Then, $\Model$ is an optimal solution to \GO.
\end{theorem}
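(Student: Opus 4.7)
The plan is to prove solution soundness by establishing three invariants that hold at every state of a \GO-derivation, and then show that saturation forces $\tau$ to be empty, at which point the invariants immediately yield optimality of $\Model$.

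First, I would formulate the invariants explicitly and prove them by induction on the length of the derivation. The invariants are: (I1) $\Model$ is \GO-consistent; (I2) for every \GO-consistent interpretation $\Model'$, one has $\Model' \models \Delta$ iff $\Model' <_{\GO} \Model$; and (I3) $\phi \models \bigl(\bigvee_{\psi \in \tau} \psi \liff \Delta\bigr)$, where the empty disjunction is $\false$. The base case is immediate from the definition of the initial state, using the defining property of $\better$ and the assumption $\phi$ is satisfiable so that $\Model_0 \neq \bot$. For the inductive step, one checks each of the three rules:
\begin{itemize}
\item \textsc{F-Split} leaves $\Model$ and $\Delta$ unchanged, so (I1) and (I2) persist; the rewritten $\tau$ preserves (I3) because the split is by hypothesis logically equivalent to $\psi = \tops(\tau)$ modulo $\phi$.
\item \textsc{F-Sat} updates $\Model$ to a new solution $\Model'$ of $\phi \wedge \psi$. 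Invariant (I1) is immediate since $\Model' \models \phi$. For (I2), I would use the already-cited Lemma~6 (from the appendix/tech report) guaranteeing $\Model' <_{\GO} \Model$, together with transitivity of $<_{\GO}$ (inherited from transitivity of $\prec$) and the defining property of $\better$: this shows that $\Model'' \models \Delta \wedge \better(\Model')$ iff $\Model'' <_{\GO} \Model'$. For (I3), the new $\tau$ is $(\Delta')$ and $\bigvee (\Delta') = \Delta'$, so the equivalence is trivial.
\item \textsc{F-Close} leaves $\Model$ unchanged; (I1) persists. For (I2), the backward direction needs the observation that if $\Model'' <_{\GO} \Model$ satisfied $\psi$, it would witness satisfiability of $\phi \wedge \psi$, contradicting $\textsc{Solve}(\phi \wedge \psi) = \bot$. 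For (I3), $\psi$ being inconsistent with $\phi$ lets us replace $\bigvee\tau$ with $\bigvee \pops(\tau)$ modulo $\phi$, matching the new $\Delta \wedge \neg\psi$.
\end{itemize}

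Next I would argue that in a saturated state, $\tau = \emptyset$. Suppose not, and let $\psi = \tops(\tau)$. Either $\textsc{Solve}(\phi \wedge \psi) = \bot$, in which case \textsc{F-Close} is applicable and produces a strictly shorter $\tau$, contradicting saturation; or $\textsc{Solve}(\phi \wedge \psi) = \Model' \neq \bot$, in which case \textsc{F-Sat} is applicable. The subtle point here, which I expect to be the main obstacle, is that applicability requires the resulting state to genuinely differ from the current one. This is where Lemma~6 does the real work: it guarantees $\Model' <_{\GO} \Model$, so in particular $\Model' \neq \Model$ by irreflexivity of $<_\GO$, and therefore \textsc{F-Sat} changes the state. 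Either way, the assumption that $\tau \neq \emptyset$ contradicts saturation.

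Finally, with $\tau = \emptyset$, invariant (I3) reduces to $\phi \models \neg \Delta$: no \GO-consistent interpretation satisfies $\Delta$. By (I2), no \GO-consistent interpretation \GO-dominates $\Model$, and by (I1), $\Model$ is itself \GO-consistent. By definition, $\Model$ is a \GO-solution, i.e., an optimal solution, completing the proof.
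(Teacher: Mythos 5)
Your proof is correct and follows essentially the same route as the paper: your invariants (I1)--(I3) are exactly the paper's Lemmas on $\GO$-consistency, ``Always Better,'' and ``Non-omissiveness'' (with its corollary that $\tau = \emptyset$ implies $\phi \models \neg\Delta$), and the final assembly is identical. The only minor divergence is in showing that a saturated state has $\tau = \emptyset$: the paper observes that \textsc{F-Split} is applicable whenever $\tau \neq \emptyset$, whereas you case-split on the result of $\textsc{Solve}(\phi \wedge \psi)$ and invoke \textsc{F-Close} or \textsc{F-Sat} (via the improvement lemma to guarantee a genuinely new state) --- both arguments work.
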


\begin{proof}(Sketch)
We show that in a saturated state $\tau = \emptyset$, and when $\tau = \emptyset$, $\phi \models \neg\Delta$. Then, we establish that $\Model$ is \GO-consistent, and that for any \GO-consistent $\T$-interpretation $\mathcal{J}$,
$\mathcal{J} \models \Delta$\; iff \;$\mathcal{J} <_{\GO} \Model$. This implies there is no $\mathcal{J}$ s.t. $\mathcal{J} \models \phi$ and $\mathcal{J} <_{\GO} \Model$, confirming $\Model$ as an optimal solution to \GO.
\qed
\end{proof}

In general, the calculus does not always have complete derivation strategies, for a variety of reasons.  It could be that the problem is unbounded, i.e., no optimal solutions exist along some branch.  Another possibility is that the order is not well-founded, and thus, an infinite sequence of improving solutions can be generated without ever reaching an optimal solution.
For the former, various checks for unboundedness can be used.  These are beyond the scope of this work, but some approaches are discussed in Trentin~\cite{patrickthesis}.  The latter can be overcome using a hybrid strategy when an optimization procedure exists (see Theorem~\ref{theorem_solve_termination}).
It is also worth observing that any derivation strategy 
is in effect an \emph{anytime procedure}:
forcibly stopping a derivation at any point  
yields (in the final state) the best solution found so far.  When an optimal solution exists and is unique, stopping early provides the best approximation up to that point of the optimal solution.

There are also fairly general conditions under which 
solution complete derivation strategies do exist.
We present them next.

\begin{definition}
A derivation strategy is \emph{progressive} if it (i) never halts in a non-saturated state and (ii) only uses \textsc{F-Split} a finite number of times in any derivation.
\end{definition}

\noindent
Let us again fix a \gomt problem $\GO = \langle t, \prec, \phi \rangle$.
Consider the set 
$A_t = \{ t^\Model \:\mid\: \Model \text{ is \GO-consistent} \}$,
collecting all values of $t$ in interpretations satisfying $\phi$.

\begin{theorem} (Termination)\label{theorem_basic_termination}
If $\prec$ is well-founded over $A_t$,
any progressive strategy reaches a saturated state.
\end{theorem}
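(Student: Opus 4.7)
(Sketch)
The plan is to show that each of the three rules can only be applied finitely many times under a progressive strategy, so that any derivation must eventually be finite; since the strategy never halts in a non-saturated state, its final state must then be saturated.

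First, I would bound the number of applications of \textsc{F-Sat}. The key input here is the invariant (already alluded to in the discussion of \textsc{F-Sat} and established by the underlying \better-lemma referenced in the text) that whenever \textsc{F-Sat} fires, the new interpretation $\Model'$ satisfies $\Model' <_{\GO} \Model$, i.e.\ $t^{\Model'} \prec t^{\Model}$. Consequently, the sequence of $t$-values produced by the successive \textsc{F-Sat} steps forms a strictly descending chain in $\prec$ restricted to $A_t$. Well-foundedness of $\prec$ over $A_t$ forbids such an infinite chain, so \textsc{F-Sat} can fire only finitely many times.

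Next, \textsc{F-Split} is applied only finitely often by the definition of a progressive strategy. From this I would derive a bound on \textsc{F-Close}: examine how each rule changes the length of $\tau$. The initial $\tau$ has one element; every \textsc{F-Sat} resets $\tau$ to a singleton (so it contributes at most $1$ new element per firing); every \textsc{F-Split} replaces one element by $k$ elements, contributing a net gain of $k-1$ (bounded per firing, hence a finite total over finitely many firings); and every \textsc{F-Close} strictly decreases $|\tau|$ by $1$. Combining these, the cumulative number of elements ever inserted into $\tau$ over the entire derivation is finite, and therefore the number of \textsc{F-Close} applications, bounded above by that total, is also finite.

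Putting the three bounds together, any derivation produced by a progressive strategy has finite length, so it ends in some state $\langle \Model, \Delta, \tau\rangle$. Because a progressive strategy by definition never halts in a non-saturated state, this terminal state must be saturated, which is what the theorem asserts. The main technical obstacle is the first step: one must rely on the state invariant that justifies $\Model' <_{\GO} \Model$ at every \textsc{F-Sat} application; everything else is a counting argument on the evolution of $|\tau|$. \qed
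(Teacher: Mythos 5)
Your proposal is correct and follows essentially the same route as the paper's proof: it bounds \textsc{F-Sat} via the improvement lemma plus well-foundedness of $\prec$ over $A_t$, bounds \textsc{F-Close} by a counting argument on the evolution of $\tau$ given finitely many \textsc{F-Split} and \textsc{F-Sat} applications, and then invokes the definition of progressive to conclude that the terminal state is saturated. The only cosmetic difference is that the paper phrases the \textsc{F-Sat} bound as a proof by contradiction on an assumed infinite derivation, whereas you argue the bounds directly.
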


\begin{proof}(Sketch)
We show that any derivation using a progressive strategy terminates when $\prec$ is well-founded. Subsequently, based on the definition of progressive, the final state must be saturated. 
\qed
\end{proof}

\begin{theorem} (Solution Completeness)\label{thm_basic_solution_completeness}
If $\prec$ is well-founded over $A_t$ and \GO has one or more optimal solutions,
every derivation generated by a progressive derivation strategy ends
with a saturated state containing one of them.
\end{theorem}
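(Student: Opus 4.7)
(Proposal)
The plan is to derive Solution Completeness directly by composing the two earlier theorems: Termination (Theorem~\ref{theorem_basic_termination}) and Solution Soundness (Theorem~\ref{thm_basic_solution_soundness}). Fix a progressive derivation strategy and the derivation it generates on $\GO = \langle t, \prec, \phi \rangle$, under the hypotheses that $\prec$ is well-founded over $A_t$ and that \GO admits at least one optimal solution.

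First, I would invoke Termination: since $\prec$ is well-founded over $A_t$, any progressive strategy reaches a saturated state. Because the strategy is progressive, it cannot halt before reaching a saturated state, so the derivation must terminate in some state $\Psi^{*} = \langle \Model^{*}, \Delta^{*}, \tau^{*} \rangle$ to which no derivation rule applies. Next, I would invoke Solution Soundness on $\Psi^{*}$: being saturated, $\Psi^{*}$ satisfies the hypotheses of Theorem~\ref{thm_basic_solution_soundness}, and thus $\Model^{*}$ is an optimal solution of \GO. Combining the two facts, the derivation ends in the saturated state $\Psi^{*}$, and its $\Model$-component is an optimal solution, which is exactly the statement to be proved.

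The only subtlety I would spell out is the role of the hypothesis that \GO has at least one optimal solution. Given that $\phi$ is assumed satisfiable throughout Section~\ref{section:derivation_rules}, the set $A_t$ is nonempty, and well-foundedness of $\prec$ over $A_t$ already guarantees the existence of a $\prec$-minimal value of $t$ and hence of a \GO-solution; so this hypothesis is either redundant or is stated for emphasis. I do not expect a real obstacle here: both Termination and Solution Soundness do the work, and the proof is essentially a one-line composition. The only care needed is verifying that the strategy's progressiveness is what lets us apply Termination (clause (ii) of progressive) and what prevents premature halting (clause (i)), so that the saturated state to which Soundness is applied is indeed the last state of the derivation rather than some intermediate state skipped over.
\qed
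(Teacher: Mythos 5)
Your proposal is correct and matches the paper's own proof exactly: the paper also derives Solution Completeness as a direct consequence of Theorem~\ref{thm_basic_solution_soundness} and Theorem~\ref{theorem_basic_termination}. Your side observation that the existence-of-an-optimal-solution hypothesis is redundant (since $\phi$ satisfiable makes $A_t$ nonempty and well-foundedness then yields a $\prec$-minimal value of $t$) is also sound, though the paper does not comment on it.
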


\begin{proof}
    The proof is a direct consequence of Theorem~\ref{thm_basic_solution_soundness} and Theorem~\ref{theorem_basic_termination}.
\qed
\end{proof}

\noindent
Solution completeness can also be achieved using an appropriate hybrid strategy.

\begin{theorem}
\label{theorem_solve_termination} 
If \GO has one or more optimal solutions and is not unbounded along any branch, then every derivation generated by a progressive hybrid strategy, where $\textsc{Solve}$ is replaced by $\textsc{Optimize}$ in $\textsc{F-Sat}$, ends with a saturated state containing one of them.
\end{theorem}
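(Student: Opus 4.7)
The plan is to reduce the claim to Theorem~\ref{thm_basic_solution_soundness} via a termination argument tailored to the \textsc{Optimize}-based variant of \textsc{F-Sat}. A progressive strategy, by definition, never halts in a non-saturated state, so it suffices to show that every such derivation is finite: Theorem~\ref{thm_basic_solution_soundness} then yields the optimality of the final $\Model$. The hypothesis that $\GO$ is not unbounded along any branch is exactly what ensures that \textsc{Optimize} is well-defined at every invocation, returning either a proper optimum or $\bot$, so each hybrid rule application is legitimate.

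For finiteness, I would first use progressivity to bound the total number of \textsc{F-Split} applications by some $N$, partitioning any derivation into at most $N{+}1$ ``quiet'' intervals containing only \textsc{F-Sat} and \textsc{F-Close}. Within any quiet interval, \textsc{F-Close} strictly reduces $|\tau|$ by one, so only boundedly many \textsc{F-Close} applications can occur before $\tau$ becomes empty. The crux is therefore to bound \textsc{F-Sat}. Let $\Model'$ be the interpretation returned by the first \textsc{F-Sat} of the interval on some top $\psi$; by the contract of \textsc{Optimize}, $\Model'$ is optimal for $\phi \wedge \psi$, and the rule resets $\tau$ to the singleton $(\Delta')$ with $\Delta' = \Delta \wedge \better(\Model')$. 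If \textsc{F-Sat} fires a second time, \textsc{Optimize} is invoked on $\phi \wedge \Delta'$ and returns some $\Model^{(2)}$ optimal for $\phi \wedge \Delta'$; the resulting top is $\Delta'' = \Delta' \wedge \better(\Model^{(2)})$. By Definition~\ref{def:better} together with the optimality of $\Model^{(2)}$, the formula $\phi \wedge \Delta''$ has no models, so \textsc{F-Sat} cannot fire a third time and \textsc{F-Close} must take over, collapsing $\tau$ to $\emptyset$. Hence every quiet interval is finite, the whole derivation terminates, and by progressivity its final state is saturated.

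The main obstacle I expect is the second-\textsc{F-Sat} step above: one must argue that when $\tau = (\Delta')$, the \textsc{Optimize} call on $\phi \wedge \Delta'$ really does exhaust every strictly dominating $\phi$-solution in the current search space, so that $\phi \wedge \Delta''$ is genuinely unsatisfiable. This hinges on the invariant $\phi \models \bigvee_i \tau_i \liff \Delta$, which must be established as a separate lemma and shown to be preserved by all three rules; the same invariant implicitly underlies the proofs of Theorem~\ref{thm_basic_solution_soundness} and Theorem~\ref{theorem_basic_termination}, so its development should be largely reusable.
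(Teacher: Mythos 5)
Your proposal is correct and follows essentially the same route as the paper's proof: bound the \textsc{F-Split} applications by progressivity, note that \textsc{F-Close} runs are bounded by $|\tau|$, and observe that after a second \textsc{Optimize}-based \textsc{F-Sat} the top formula $\Delta''=\Delta'\wedge\better(\Model^{(2)})$ is unsatisfiable with $\phi$, forcing \textsc{F-Close} and an empty $\tau$, after which Theorem~\ref{thm_basic_solution_soundness} gives optimality. The invariant $\phi\models\bigvee_i\tau_i\liff\Delta$ you flag is indeed a separate lemma in the paper (used for solution soundness), though the two-\textsc{F-Sat} step itself needs only the fact that \textsc{F-Sat} resets $\tau$ to the singleton $(\Delta')$.
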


\begin{proof}(Sketch)
If $D$ is such a derivation, we note that $\textsc{F-Split}$ can only be applied a finite number of times in $D$ and consider the suffix of $D$ after the last application of $\textsc{F-Split}$.  
In that suffix, $\textsc{F-Close}$ can only be applied a finite number of times in a row, after which $\textsc{F-Sat}$ must be applied.  We then show that due to the properties of $\textsc{Optimize}$, this must be followed by either an application of $\textsc{F-Close}$ or a single application of $\textsc{F-Sat}$ followed by $\textsc{F-Close}$. Both cases result in saturated states.  The theorem then follows from Theorem~\ref{thm_basic_solution_soundness}. 
\qed
\end{proof}

\section{Related Work}
\label{section:related_work}
Various approaches for solving the OMT problem have been proposed. We summarize the key ideas below and refer the reader to Trentin~\cite{patrickthesis} for a more thorough survey.

The \emph{offline schema} employs an SMT solver as a black box for optimization search through incremental calls~\cite{PB-OMT12,Sebastiani:2015:OMT:2737801.2699915}, following linear- or binary-search strategies. Initial bounds on the objective function are given and  iteratively tightened after each call to the SMT solver.
In contrast, the \emph{inline schema} conducts the optimization search within the SMT solver itself~\cite{PB-OMT12,Sebastiani:2015:OMT:2737801.2699915}, integrating the optimization criteria into its internal algorithm. While the inline schema can be more efficient than the offline counterpart, it necessitates invasive changes to the solver and may not be possible for every theory.

\emph{Symbolic Optimization} optimizes multiple independent linear arithmetic objectives simultaneously~\cite{Li2014SymbolicOW}, seeking optimal solutions for each corresponding objective. This approach improves performance by sharing SMT search effort. It exists in both offline and inline versions, with the latter demonstrating superior performance.
%
Other arithmetic schemas 
combine simplex, branch-and-bound, and cutting-plane techniques within SMT solvers~\cite{LIA-OPT-Masters,10.1007/11814948_18}. A polynomial constraint extension has also been introduced~\cite{10.1007/978-3-319-09284-3_25}.

Theory-specific techniques address objectives involving pseudo-Booleans~\cite{PB-OMT12,Sebastiani:2015:OMT:2737801.2699915,DBLP:journals/corr/SebastianiT17,10.1007/978-3-642-12002-2_8}, bitvectors~\cite{Nadel2016BitVectorO,DBLP:journals/corr/abs-1905-02838}, bitvectors combined with floating-point arithmetic~\cite{DBLP:journals/jar/TrentinS21}, and nonlinear arithmetic~\cite{DBLP:conf/frocos/BigarellaCGIJRS21}.
Other related work includes techniques for lexicographic optimization~\cite{Bjrner2014ZM}, Pareto optimization~\cite{pareto,Bjrner2014ZM}, MaxSMT~\cite{FazekasBB18},
and All-OMT~\cite{patrickthesis}.

Our calculus is designed to capture all of these variations.  It directly corresponds to the offline schema, can handle both single- and multi-objective problems, and can 
integrate solvers with inline capabilities (including theory-specific ones) using the hybrid solving strategy. Efficient MaxSMT approaches~\cite{FazekasBB18} can also be mimicked in our calculus. These approaches systematically explore the search space by iteratively processing segments derived from unsat cores. Our calculus can instantiate these branches using the $\textsc{F-Split}$ rule, by first capturing unsat cores from calls to $\textsc{F-Close}$, and then using these cores to direct the search in the $\textsc{F-Split}$ rule.  

\section{Conclusion and Future Work}
\label{section:conclusion}

This paper introduces the Generalized OMT problem, a proper extension of the OMT problem.
It also provides a general setting for formalizing various approaches
for solving the problem in terms of a novel calculus for \gomt and proves its key correctness properties.
As with previous work on abstract transition systems for SMT~\cite{NieOT-JACM-06,KrsGoe-FROCOS-07,MouraB08,JovanovicM12},
this work establishes a framework for both theoretical exploration and 
practical implementations. 
The framework is general in several aspects: 
$(i)$ it is parameterized by the optimization order, which does not need to be
total;
$(ii)$ it unifies single- and multi-objective optimization problems in a single definition;
$(iii)$ it is theory-agnostic, making it applicable to any theory or combination of theories; and
$(iv)$ it provides a formal basis for understanding and exploring search strategies for generalized OMT.

In future work, we plan to explore an extension of the calculus to the generalized All-OMT problem.
We also plan to develop a concrete implementation of the calculus
in a state-of-the-art SMT solver and evaluate it experimentally against 
current OMT solvers.

\subsection*{Acknowledgements}
This work was funded in part by the Stanford Agile Hardware Center and by the National Science Foundation (grant 2006407).
\ifarxiv
\raggedbottom
\fi
\pagebreak


%
%
%
%
\bibliographystyle{splncs04}
\bibliography{main}

\begin{thebibliography}{10}
\providecommand{\url}[1]{\texttt{#1}}
\providecommand{\urlprefix}{URL }
\providecommand{\doi}[1]{https://doi.org/#1}

\bibitem{akinlanathesis}
Akinlana, D.M.: {New Developments in Statistical Optimal Designs for Physical and Computer Experiments}. Ph.D. thesis, University of South Florida (6 2022)

\bibitem{smt-lib}
Barrett, C., Fontaine, P., Tinelli, C.: {The Satisfiability Modulo Theories Library (SMT-LIB)}. {\tt www.SMT-LIB.org} (2016)

\bibitem{smt}
Barrett, C.W., Sebastiani, R., Seshia, S.A., Tinelli, C.: Satisfiability modulo theories. In: Handbook of Satisfiability, Frontiers in Artificial Intelligence and Applications, vol.~185, pp. 825--885. {IOS} Press (2009)

\bibitem{BertolissiSR18}
Bertolissi, C., dos Santos, D.R., Ranise, S.: Solving multi-objective workflow satisfiability problems with optimization modulo theories techniques. In: Bertino, E., Lin, D., Lobo, J. (eds.) Proceedings of the 23nd {ACM} on Symposium on Access Control Models and Technologies, {SACMAT} 2018, Indianapolis, IN, USA, June 13-15, 2018. pp. 117--128. {ACM} (2018)

\bibitem{Bian2017SolvingSA}
Bian, Z., Chudak, F.A., Macready, W.G., Roy, A., Sebastiani, R., Varotti, S.: Solving sat and maxsat with a quantum annealer: Foundations and a preliminary report. In: International Symposium on Frontiers of Combining Systems (2017)

\bibitem{DBLP:conf/frocos/BigarellaCGIJRS21}
Bigarella, F., Cimatti, A., Griggio, A., Irfan, A., Jon{\'{a}}s, M., Roveri, M., Sebastiani, R., Trentin, P.: Optimization modulo non-linear arithmetic via incremental linearization. In: Konev, B., Reger, G. (eds.) Frontiers of Combining Systems - 13th International Symposium, FroCoS 2021, Birmingham, UK, September 8-10, 2021, Proceedings. Lecture Notes in Computer Science, vol. 12941, pp. 213--231. Springer (2021)

\bibitem{Bit-MonnotEtAl}
Bit{-}Monnot, A., Leofante, F., Pulina, L., {\'{A}}brah{\'{a}}m, E., Tacchella, A.: Smartplan: a task planner for smart factories. arXiv preprint arXiv:1806.07135  (2018)

\bibitem{Bjrner2014ZM}
Bj{\o}rner, N., Phan, A.: {{\(\nu\)}Z - Maximal Satisfaction with {Z3}}. In: 6th International Symposium on Symbolic Computation in Software Science, {SCSS} 2014, Gammarth, La Marsa, Tunisia, December 7-8, 2014. pp.~1--9 (2014)

\bibitem{nuZ14}
Bj{\o}rner, N.S., Phan, A.D.: $\nu$z - maximal satisfaction with z3. In: International Symposium on Symbolic Computation in Software Science (2014)

\bibitem{nuZ15}
Bj{\o}rner, N.S., Phan, A., Fleckenstein, L.: {\(\nu\)}z - an optimizing {SMT} solver. In: Baier, C., Tinelli, C. (eds.) Tools and Algorithms for the Construction and Analysis of Systems - 21st International Conference, {TACAS} 2015, Held as Part of the European Joint Conferences on Theory and Practice of Software, {ETAPS} 2015, London, UK, April 11-18, 2015. Proceedings. Lecture Notes in Computer Science, vol.~9035, pp. 194--199. Springer (2015)

\bibitem{LorenzoEtAl}
Candeago, L., Larraz, D., Oliveras, A., Rodr{\'i}guez-Carbonell, E., Rubio, A.: Speeding up the constraint-based method in difference logic. In: Creignou, N., Le~Berre, D. (eds.) Theory and Applications of Satisfiability Testing -- SAT 2016. pp. 284--301. Springer International Publishing, Cham (2016)

\bibitem{10.1007/978-3-642-12002-2_8}
Cimatti, A., Franz{\'e}n, A., Griggio, A., Sebastiani, R., Stenico, C.: {{Satisfiability Modulo the Theory of Costs: Foundations and Applications}}. In: Esparza, J., Majumdar, R. (eds.) Tools and Algorithms for the Construction and Analysis of Systems. pp. 99--113. Springer Berlin Heidelberg, Berlin, Heidelberg (2010)

\bibitem{CimattiEtAl2013}
Cimatti, A., Griggio, A., Schaafsma, B.J., Sebastiani, R.: A modular approach to maxsat modulo theories. In: Proceedings of the 16th International Conference on Theory and Applications of Satisfiability Testing. p. 150–165. SAT'13, Springer-Verlag, Berlin, Heidelberg (2013)

\bibitem{CraciunasEtAl16}
Craciunas, S.S., Oliver, R.S., Chmel\'{\i}k, M., Steiner, W.: Scheduling real-time communication in ieee 802.1qbv time sensitive networks. In: Proceedings of the 24th International Conference on Real-Time Networks and Systems. p. 183–192. RTNS '16, Association for Computing Machinery, New York, NY, USA (2016)

\bibitem{DemarchiEtAl19}
Demarchi, S., Menapace, M., Tacchella, A.: Automating elevator design with satisfiability modulo theories. In: 2019 IEEE 31st International Conference on Tools with Artificial Intelligence (ICTAI). pp. 26--33 (2019)

\bibitem{demarchi2019automated}
Demarchi, S., Tacchella, A., Menapace, M.: Automated design of complex systems with constraint programming techniques. In: CPS Summer School, PhD Workshop. pp. 51--59 (2019)

\bibitem{SRI:simplex:dpllt}
Dutertre, B., de~Moura, L.: Integrating simplex with {DPLL(T)}. Tech. rep., {SRI International} (2006)

\bibitem{eracscu2020applying}
Era{\c{s}}cu, M., Micota, F., Zaharie, D.: Applying optimization modulo theory, mathematical programming and symmetry breaking for automatic deployment in the cloud of component-based applications extended abstract. In: 4th Women in Logic Workshop. p.~6 (2020)

\bibitem{erata2023towards}
Erata, F., Piskac, R., Mateu, V., Szefer, J.: Towards automated detection of single-trace side-channel vulnerabilities in constant-time cryptographic code. arXiv preprint arXiv:2304.02102  (2023)

\bibitem{FazekasBB18}
Fazekas, K., Bacchus, F., Biere, A.: Implicit hitting set algorithms for maximum satisfiability modulo theories. In: Galmiche, D., Schulz, S., Sebastiani, R. (eds.) Automated Reasoning - 9th International Joint Conference, {IJCAR} 2018, Held as Part of the Federated Logic Conference, FloC 2018, Oxford, UK, July 14-17, 2018, Proceedings. Lecture Notes in Computer Science, vol. 10900, pp. 134--151. Springer (2018)

\bibitem{feng2022reliability}
Feng, J., Zhang, T., Yi, C.: Reliability-aware comprehensive routing and scheduling in time-sensitive networking. In: Wireless Algorithms, Systems, and Applications: 17th International Conference, WASA 2022, Dalian, China, November 24--26, 2022, Proceedings, Part II. pp. 243--254. Springer (2022)

\bibitem{GavranEtAl20}
Gavran, I., Darulova, E., Majumdar, R.: Interactive synthesis of temporal specifications from examples and natural language. Proceedings of the ACM on Programming Languages  \textbf{4}(OOPSLA) (Nov 2020)

\bibitem{delayconv}
Gretsch, R., Song, P., Madhavan, A., Lau, J., Sherwood, T.: Energy efficient convolutions with temporal arithmetic. In: Proceedings of the 29th ACM International Conference on Architectural Support for Programming Languages and Operating Systems, Volume 2. p. 354–368. ASPLOS '24, Association for Computing Machinery, New York, NY, USA (2024)

\bibitem{HenryEtAl14}
Henry, J., Asavoae, M., Monniaux, D., Ma\"{\i}za, C.: How to compute worst-case execution time by optimization modulo theory and a clever encoding of program semantics. SIGPLAN Not.  \textbf{49}(5),  43–52 (jun 2014)

\bibitem{pareto}
Jackson, D., Estler, H.C., Rayside, D.: {The Guided Improvement Algorithm for Exact, General-Purpose, Many-Objective Combinatorial Optimization}. Tech. Rep. 2009-033, MIT-CSAIL (07 2009)

\bibitem{DBLP:conf/vmcai/JiangCWW17}
Jiang, J., Chen, L., Wu, X., Wang, J.: Block-wise abstract interpretation by combining abstract domains with {SMT}. In: Bouajjani, A., Monniaux, D. (eds.) Verification, Model Checking, and Abstract Interpretation - 18th International Conference, {VMCAI} 2017, Paris, France, January 15-17, 2017, Proceedings. Lecture Notes in Computer Science, vol. 10145, pp. 310--329. Springer (2017)

\bibitem{jin2021joint}
Jin, X., Xia, C., Guan, N., Zeng, P.: Joint algorithm of message fragmentation and no-wait scheduling for time-sensitive networks. IEEE/CAA Journal of Automatica Sinica  \textbf{8}(2),  478--490 (2021)

\bibitem{JovanovicM12}
Jovanovic, D., de~Moura, L.M.: Solving non-linear arithmetic. In: Gramlich, B., Miller, D., Sattler, U. (eds.) Automated Reasoning - 6th International Joint Conference, {IJCAR} 2012, Manchester, UK, June 26-29, 2012. Proceedings. Lecture Notes in Computer Science, vol.~7364, pp. 339--354. Springer (2012)

\bibitem{Kar7}
Karpenkov, G.E.: Finding inductive invariants using satisfiability modulo theories and convex optimization. Theses, {Universit{\'e} Grenoble Alpes} (Mar 2017)

\bibitem{Knusel21}
Knüsel, M.: Optimizing Declarative Power Sequencing. Master thesis, ETH Zurich, Zurich (2021-09)

\bibitem{KovasznaiEtAl18Puli}
Kov{\'a}sznai, G., Bir{\'o}, C., Erd{\'e}lyi, B.: Puli - a problem-specific omt solver. EasyChair Preprints  (2018)

\bibitem{KrsGoe-FROCOS-07}
{Krsti\'c}, S., Goel, A.: Architecting solvers for {SAT} modulo theories: {Nelson-Oppen} with {DPLL}. In: Konev, B., Wolter, F. (eds.) Proceeding of the Symposium on Frontiers of Combining Systems (Liverpool, England). Lecture Notes in Computer Science, vol.~4720, pp. 1--27. Springer (2007)

\bibitem{Lai2022}
Lai, L., Fiaschi, L., Cococcioni, M., Deb, K.: Pure and mixed lexicographic-paretian many-objective optimization: state of the art. Natural Computing: An International Journal  \textbf{22}(2),  227–242 (aug 2022)

\bibitem{10.1007/978-3-319-09284-3_25}
Larraz, D., Oliveras, A., Rodr{\'i}guez-Carbonell, E., Rubio, A.: {{Minimal-Model-Guided Approaches to Solving Polynomial Constraints and Extensions}}. In: Sinz, C., Egly, U. (eds.) Theory and Applications of Satisfiability Testing -- SAT 2014. pp. 333--350. Springer International Publishing, Cham (2014)

\bibitem{lee2020sp}
Lee, D., Park, D., Ho, C.T., Kang, I., Kim, H., Gao, S., Lin, B., Cheng, C.K.: Sp\&r: Smt-based simultaneous place-and-route for standard cell synthesis of advanced nodes. IEEE Transactions on Computer-Aided Design of Integrated Circuits and Systems  \textbf{40}(10),  2142--2155 (2020)

\bibitem{LeofanteEtAl21}
Leofante, F., Giunchiglia, E., {\'A}brah{\'a}m, E., Tacchella, A.: Optimal planning modulo theories. In: Proceedings of the Twenty-Ninth International Conference on International Joint Conferences on Artificial Intelligence. pp. 4128--4134 (2021)

\bibitem{Li2014SymbolicOW}
Li, Y., Albarghouthi, A., Kincaid, Z., Gurfinkel, A., Chechik, M.: {Symbolic optimization with SMT solvers}. In: POPL (2014)

\bibitem{LiuEtAl}
Liu, T., Tyszberowicz, S.S., Beckert, B., Taghdiri, M.: Computing exact loop bounds for bounded program verification. In: Larsen, K.G., Sokolsky, O., Wang, J. (eds.) Dependable Software Engineering. Theories, Tools, and Applications - Third International Symposium, {SETTA} 2017, Changsha, China, October 23-25, 2017, Proceedings. Lecture Notes in Computer Science, vol. 10606, pp. 147--163. Springer (2017)

\bibitem{MarchettoEtAl21}
Marchetto, G., Sisto, R., Valenza, F., Yusupov, J., Ksentini, A.: A formal approach to verify connectivity and optimize vnf placement in industrial networks. IEEE Transactions on Industrial Informatics  \textbf{17}(2),  1515--1525 (2021)

\bibitem{MouraB08}
de~Moura, L.M., Bj{\o}rner, N.S.: Model-based theory combination. In: Krstic, S., Oliveras, A. (eds.) Proceedings of the 5th International Workshop on Satisfiability Modulo Theories, SMT@CAV 2007, Berlin, Germany, July 1-2, 2007. Electronic Notes in Theoretical Computer Science, vol.~198, pp. 37--49. Elsevier (2007)

\bibitem{Nadel2016BitVectorO}
Nadel, A., Ryvchin, V.: {Bit-Vector Optimization}. In: TACAS (2016)

\bibitem{Nguyen2016RequirementsEA}
Nguyen, C.M., Sebastiani, R., Giorgini, P., Mylopoulos, J.: Requirements evolution and evolution requirements with constrained goal models. In: International Conference on Conceptual Modeling (2016)

\bibitem{NguyenEtAl17}
Nguyen, C.M., Sebastiani, R., Giorgini, P., Mylopoulos, J.: Modeling and reasoning on requirements evolution with constrained goal models. In: Cimatti, A., Sirjani, M. (eds.) Software Engineering and Formal Methods - 15th International Conference, {SEFM} 2017, Trento, Italy, September 4-8, 2017, Proceedings. Lecture Notes in Computer Science, vol. 10469, pp. 70--86. Springer (2017)

\bibitem{NguyenEtAl18}
Nguyen, C.M., Sebastiani, R., Giorgini, P., Mylopoulos, J.: Multi object reasoning with constrained goal model. Requirements Engineering  \textbf{23} (06 2018)

\bibitem{10.1007/11814948_18}
Nieuwenhuis, R., Oliveras, A.: {{On SAT Modulo Theories and Optimization Problems}}. In: Biere, A., Gomes, C.P. (eds.) Theory and Applications of Satisfiability Testing - SAT 2006. pp. 156--169. Springer Berlin Heidelberg, Berlin, Heidelberg (2006)

\bibitem{NieOT-JACM-06}
Nieuwenhuis, R., Oliveras, A., Tinelli, C.: {Solving SAT and SAT Modulo Theories: from an abstract Davis-Putnam-Logemann-Loveland Procedure to DPLL(T)}. Journal of the ACM  \textbf{53}(6),  937--977 (Nov 2006)

\bibitem{PaolettiEtAl19}
Paoletti, N., Jiang, Z., Islam, M.A., Abbas, H., Mangharam, R., Lin, S., Gruber, Z., Smolka, S.A.: Synthesizing stealthy reprogramming attacks on cardiac devices. In: Proceedings of the 10th ACM/IEEE International Conference on Cyber-Physical Systems. p. 13–22. ICCPS '19, Association for Computing Machinery, New York, NY, USA (2019)

\bibitem{ParkEtAl20}
Park, D., Lee, D., Kang, I., Gao, S., Lin, B., Cheng, C.K.: Sp\&r: Simultaneous placement and routing framework for standard cell synthesis in sub-7nm. In: 2020 25th Asia and South Pacific Design Automation Conference (ASP-DAC). pp. 345--350 (2020)

\bibitem{patti2022deadline}
Patti, G., Bello, L.L., Leonardi, L.: Deadline-aware online scheduling of tsn flows for automotive applications. IEEE Transactions on Industrial Informatics  (2022)

\bibitem{RatschanEtAl}
Ratschan, S.: Simulation based computation of certificates for safety of dynamical systems. arXiv preprint arXiv:1707.00879  (2017)

\bibitem{LIA-OPT-Masters}
Roc, O.V.: {Optimization Modulo Theories}. Master's thesis, Polytechnic University of Catalonia, https://upcommons.upc.edu/handle/2099.1/14204 (2011)

\bibitem{rybalchenko2021supercharging}
Rybalchenko, A., Vuppalapati, C.: Supercharging plant configurations using z3. In: Integration of Constraint Programming, Artificial Intelligence, and Operations Research: 18th International Conference, CPAIOR 2021, Vienna, Austria, July 5--8, 2021, Proceedings. vol. 12735, p.~1. Springer Nature (2021)

\bibitem{inventions7030046}
Schmitt, T., Hoffmann, M., Rodemann, T., Adamy, J.: Incorporating human preferences in decision making for dynamic multi-objective optimization in model predictive control. Inventions  \textbf{7}(3) (2022)

\bibitem{OMTLAcosts}
Sebastiani, R., Trentin, P.: Pushing the envelope of optimization modulo theories with linear-arithmetic cost functions. In: Baier, C., Tinelli, C. (eds.) Tools and Algorithms for the Construction and Analysis of Systems. pp. 335--349. Springer Berlin Heidelberg, Berlin, Heidelberg (2015)

\bibitem{PB-OMT12}
Sebastiani, R., Tomasi, S.: Optimization in smt with $\mathcal{LA}$($\mathbb{Q}$) cost functions. In: Gramlich, B., Miller, D., Sattler, U. (eds.) Automated Reasoning. pp. 484--498. Springer Berlin Heidelberg, Berlin, Heidelberg (2012)

\bibitem{Sebastiani:2015:OMT:2737801.2699915}
Sebastiani, R., Tomasi, S.: {Optimization Modulo Theories with Linear Rational Costs}. ACM Trans. Comput. Logic  \textbf{16}(2),  12:1--12:43 (Feb 2015)

\bibitem{OptiMathSAT}
Sebastiani, R., Trentin, P.: Optimathsat: A tool for optimization modulo theories. Journal of Automated Reasoning pp. 1--38 (2015)

\bibitem{DBLP:journals/corr/SebastianiT17}
Sebastiani, R., Trentin, P.: On optimization modulo theories, maxsmt and sorting networks. In: Legay, A., Margaria, T. (eds.) Tools and Algorithms for the Construction and Analysis of Systems. pp. 231--248. Springer Berlin Heidelberg, Berlin, Heidelberg (2017)

\bibitem{shen2022qos}
Shen, D., Zhang, T., Wang, J., Deng, Q., Han, S., Hu, X.S.: Qos guaranteed resource allocation for coexisting embb and urllc traffic in 5g industrial networks. In: 2022 IEEE 28th International Conference on Embedded and Real-Time Computing Systems and Applications (RTCSA). pp. 81--90. IEEE (2022)

\bibitem{sivaraman2020counterexample}
Sivaraman, A., Farnadi, G., Millstein, T., Van~den Broeck, G.: Counterexample-guided learning of monotonic neural networks. Advances in Neural Information Processing Systems  \textbf{33},  11936--11948 (2020)

\bibitem{Subramanian2017}
Subramanian, S., Berzish, M., Tripp, O., Ganesh, V.: A solver for a theory of strings and bit-vectors. In: 2017 IEEE/ACM 39th International Conference on Software Engineering Companion (ICSE-C). pp. 124--126 (2017)

\bibitem{tarrach2022threat}
Tarrach, T., Ebrahimi, M., König, S., Schmittner, C., Bloem, R., Nickovic, D.: Threat repair with optimization modulo theories (2022)

\bibitem{TESO2017166}
Teso, S., Sebastiani, R., Passerini, A.: Structured learning modulo theories. Artificial Intelligence  \textbf{244},  166--187 (2017)

\bibitem{TiernoEtAl22}
Tierno, A., Turri, G., Cimatti, A., Passerone, R.: Symbolic encoding of reliability for the design of redundant architectures. In: 2022 IEEE 5th International Conference on Industrial Cyber-Physical Systems (ICPS). pp. 01--06 (2022)

\bibitem{patrickthesis}
Trentin, P.: {Optimization Modulo Theories with OptiMathSAT}. Ph.D. thesis, University of Trento (5 2019)

\bibitem{DBLP:journals/corr/abs-1905-02838}
Trentin, P., Sebastiani, R.: Optimization modulo the theories of signed bit-vectors and floating-point numbers. J. Autom. Reason.  \textbf{65}(7),  1071–1096 (oct 2021)

\bibitem{DBLP:journals/jar/TrentinS21}
Trentin, P., Sebastiani, R.: Optimization modulo the theories of signed bit-vectors and floating-point numbers. J. Autom. Reason.  \textbf{65}(7),  1071--1096 (2021)

\bibitem{DBLP:conf/fmcad/TsiskaridzeSMSL21}
Tsiskaridze, N., Strange, M., Mann, M., Sreedhar, K., Liu, Q., Horowitz, M., Barrett, C.W.: Automating system configuration. In: Formal Methods in Computer Aided Design, {FMCAD} 2021, New Haven, CT, USA, October 19-22, 2021. pp. 102--111. {IEEE} (2021)

\bibitem{yao2021program}
Yao, P., Shi, Q., Huang, H., Zhang, C.: Program analysis via efficient symbolic abstraction. Proceedings of the ACM on Programming Languages  \textbf{5}(OOPSLA) (Oct 2021)

\end{thebibliography}


\appendix

\section{Example}
\label{appendix:examples}

\begin{example}[A single-objective \gomt]
Let $\GO:=\langle n, \bvusucc{4}, \phi \rangle$, where:
$\phi := y \bvulte{4} x \; \wedge \; 0010 \bvulte{4} x \; \wedge \; x \bvulte{4} 1000 \; \wedge \; 0001 \shl{4} n \bvulte{4} y \; \wedge \; y \bvult{4} 0001 \shl{4} (n \bvadd{2} 0001)\; \wedge \; n = x \bvsub{2} y $, and $n$, $x$, and $y$ are variables of sort \sbv{4}.\\

\noindent
We demonstrate execution of the \gomt calculus on $\GO$ with the binary search strategy over $n$. The binary search requires bounds on $n$. For simplicity, we derive the upper bound on $n$ from $\phi$. Specifically, $y \bvulte{4} x$, $x \bvulte{4} 1000$, $n = x \bvsub{4} y$, and $0001 \shl{4} n \bvulte{4} y$ imply that $n \bvulte{4} 0011$. Let $\Model_0:= \{n \mapsto 0001, x \mapsto 0011, y \mapsto 0010\},\; \Delta_0 := 0011 \bvugte{4} n \bvugt{4} 0001,\; \tau_0 := (n \bvugt{4} 0001),\; S_0 := \emptyset$.
The initial objective term value is $0001$. We provide a possible execution of the calculus below. 
\begin{enumerate}[leftmargin=1em]
\item We apply the $\textsc{F-Split}$ rule.
\[
\inferrule*[Left=F-Split]{
  \tau \neq \emptyset \;\;\; \psi = n \bvugt{4} 0001 \\
  \phi \!\models \!(0010\! \bvugte{4} \!n \!\bvugt{4} \!0001 \!\vee\! 0011 \!\bvugte{4}\! n \!\bvugt{4}\! 0010 \!\liff\! \psi)
}{
 \tau := (0011 \!\bvugte{4} \!n\! \bvugt{4} \!0010, 0010\! \bvugte{4}\! n\! \bvugt{4}\! 0001)
}
\]

\item From the first branch $n = 0011$ and must hold $y = 1000$, however this conflicts with the constraints $n = x \bvsub{4} y$ and $x \bvulte{4} 1000$. Thus, the next applicable rule is $\textsc{F-Close}$.

\[
\inferrule*[Left=F-Close]{
  \tau \neq \emptyset \quad \psi =  0011\! \bvugte{4}\! n\! \bvugt{4} \!0010 \quad \textsc{Solve}(\phi \!\wedge\! \psi) \models \bot 
}{
  \Delta := 0011\bvugte{4} \!n\! \bvugt{4} \!0001 \!\wedge \!\neg (0011 \!\bvugte{4}\! n\! \bvugt{4}\! 0010), \\ 
  \tau := (0010\! \bvugte{4}\! n\! \bvugt{4}\! 0001)
}
\]
\noindent
The search space in the postcondition is simplified to $\Delta = 0010 \bvugte{4} n \bvugt{4} 0001$.

\item Now, $\textsc{Top}(\tau)$ implies $n = 0010$ and $\phi \wedge \textsc{Top}(\tau) \not\models \bot$. Consequently, either rule $\textsc{F-Split}$ or $\textsc{F-Sat}$ is applicable. Let us apply $\textsc{F-Sat}$ and derive a better solution of $\phi$, $\Model ' = \{n \mapsto 0010, x \mapsto 0111, y \mapsto 0101\} $.

\[
\inferrule*[Left=F-Sat]{
  \tau \neq \emptyset \quad \psi = 0010 \!\bvugte{4} \!n \!\bvugt{4} \!0001 \quad 
  \textsc{Solve}(\phi \wedge \psi) =\Model' \quad \Model' \neq \bot\\
  \Delta' =  (0010 \!\bvugte{4}\! n\! \bvugt{4}\! 0001)\! \wedge \!(n\! \bvugt{4} \!0010)
}{
  \Model := \{n \mapsto 0010, x \mapsto 0111, y \mapsto 0101\}, \; \Delta := \bot, \; \tau:= (\bot)
}
\]
\item Next, $\textsc{F-Close}$ is applicable since $\tau = (\bot)$ and $\phi \wedge \bot \models \bot$.
\[
\inferrule*[Left=F-Close]{
  \tau \neq \emptyset \quad \psi = \bot \quad \textsc{Solve}(\phi \wedge \psi) \models \bot 
}{
  \Delta := \bot \wedge \neg (\bot), \; \tau := \emptyset
}
\]
And $\Model = \{n \mapsto 0010, x \mapsto 0101, y \mapsto 0100\}$ is a solution to $\GO$.
\end{enumerate}
\end{example}

\section{Correctness}
\label{appendix:proofs}
\renewcommand{\thetheorem}{\arabic{theorem}}
\setcounter{theorem}{0}
\setcounter{lemma}{0}
\setcounter{corollary}{0}

In this section, we provide the details for the theorems in Section~\ref{section:proofs:basic}.  We again fix a \gomt problem $\GO=\langle t, \prec, \phi \rangle$. 

\begin{lemma}\label{lemma_model_omt_consistent}($\GO$-consistency)
Each element of the solution sequence of a $\GO$-derivation is $\GO\!$-consistent.
\end{lemma}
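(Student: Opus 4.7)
The plan is to proceed by induction on the position of an element in the solution sequence, i.e., on the length of the derivation prefix leading to the corresponding state. The key observation is that of the three derivation rules, only $\textsc{F-Sat}$ modifies the $\Model$ component of the state, so the inductive step reduces to a case analysis that is essentially trivial for two of the three rules.

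For the base case, I would appeal to the initial state. Since we fixed $\phi$ to be satisfiable at the start of Section~\ref{section:derivation_rules}, the call $\textsc{Solve}(\phi)$ returns some interpretation $\Model_0 \neq \bot$, and by the defining property of $\textsc{Solve}$ we have $\Model_0 \models \phi$, i.e., $\Model_0$ is $\GO$-consistent.

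For the inductive step, suppose the $\Model$ in the current state is $\GO$-consistent, and consider the rule applied to reach the next state. Rules $\textsc{F-Split}$ and $\textsc{F-Close}$ leave the $\Model$ component untouched (they only update $\tau$, and in the case of $\textsc{F-Close}$ also $\Delta$), so the inductive hypothesis carries over immediately. The only interesting rule is $\textsc{F-Sat}$, which sets $\Model := \Model'$ where the premise guarantees $\Model' = \textsc{Solve}(\phi \wedge \psi)$ with $\Model' \neq \bot$; by the defining property of $\textsc{Solve}$ this means $\Model' \models \phi \wedge \psi$, hence $\Model' \models \phi$, so the new $\Model$ is again $\GO$-consistent.

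There is no real obstacle here: the lemma is essentially a sanity check that the calculus never installs a non-solution of $\phi$ into the $\Model$ slot. The only subtlety worth spelling out is the appeal to the global assumption that $\phi$ is satisfiable, which ensures the base case is nonvacuous; beyond that, the proof is a direct reading of the rule schemas in Figure~\ref{rules}.
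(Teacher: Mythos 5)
Your proof is correct and follows essentially the same route as the paper's: the paper also observes that the initial interpretation satisfies $\phi$ by the definition of the initial state, and that any later interpretation must have been introduced by \textsc{F-Sat}, which calls \textsc{Solve} on a formula conjunctively containing $\phi$. Casting this as an explicit induction with the trivial \textsc{F-Split}/\textsc{F-Close} cases spelled out is just a slightly more formal packaging of the identical argument.
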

\begin{proof}
    Let $I$ be the solution sequence, and 
    let $\Model_0=\tops(I)$.  By definition of derivations and initial states, we have that $\Model_0 \models \phi$. Now, consider some $\Model\in I$ such that $\Model\not=\Model_0$.  Observe that it must have been introduced by an application of 
    $\textsc{F-Sat}$, since only this rule changes the solution in the state. But \textsc{F-Sat} explicitly invokes the $\textsc{Solve}$ function on a formula that conjunctively includes $\phi$. Thus, $\Model\models\phi$.
    \qed
\end{proof}

\begin{lemma}\label{lemma_transitive}(Transitivity)
The operator $<_{\GO}$ is transitive.
\end{lemma}

\begin{proof}
    The proof follows from the transitivity property of the strict partial order $\prec$. Suppose
    $\Model<_{\GO}\Model'$ and $\Model'<_{\GO}\Model''$. 
    Then,  $\Model,\; \Model',\;\Model''$ are $\GO$-consistent, $t^{\Model} \prec t^{\Model'}$, and $t^{\Model'} \prec t^{\Model}$. 
    Since $\prec$ is transitive,
    $t^{\Model} \prec t^{\Model}$, and since $\Model$ and $\Model''$ are $\GO$-consistent, we conclude: $\Model<_{\GO}\Model''$. \qed
\end{proof}

\noindent
Lemma~\ref{lemma_tau_eq_Delta} and Lemma~\ref{lemma_tau_disjunction} capture the fact that updates to $\tau$ preserve the set of $\GO$-consistent interpretations in the explored search space $\Delta$. Lemma~\ref{lemma_tau_eq_Delta} ensures that each sub-formula in $\tau$ represents a part of the overall search space. Lemma~\ref{lemma_tau_disjunction} ensures that $\tau$ always covers the entire search space.

\begin{lemma}\label{lemma_tau_eq_Delta}(Non-expansiveness)
Let $\langle \Model, \Delta, \tau\rangle$ be a state in a $\GO$-derivation.  Then, for each $\tau_i\in\tau$, $\phi \models (\tau_i \limpl \Delta)$. 
\end{lemma}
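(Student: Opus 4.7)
The plan is to prove this by structural induction on the length of the $\GO$-derivation leading to the state $\langle \Model, \Delta, \tau\rangle$. For the base case, the initial state has $\tau_0 = (\Delta_0)$, so the only element of $\tau_0$ is $\Delta_0$ itself, and the required entailment $\phi \models (\Delta_0 \limpl \Delta_0)$ is a tautology.

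For the inductive step, I would assume the property holds in the current state $\langle \Model, \Delta, \tau\rangle$ and perform a case analysis on the rule applied to reach the successor state. The \textsc{F-Sat} case is the easiest: the new $\tau$ is the singleton $(\Delta')$ and the new $\Delta$ is $\Delta' = \Delta \wedge \better(\Model')$, so the required implication is trivial. For \textsc{F-Split}, $\Delta$ is unchanged, so the inductive hypothesis handles every element inherited from $\textsc{Pop}(\tau)$; for the newly-inserted formulas $\psi_1,\dots,\psi_k$, I would use the rule's premise $\phi \models \psi \liff \bigvee_j \psi_j$ together with the inductive hypothesis $\phi \models \psi \limpl \Delta$ (applied to $\psi = \textsc{Top}(\tau)$, which was in the previous $\tau$) to conclude $\phi \models \psi_j \limpl \Delta$ for each $j$.

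The \textsc{F-Close} case is the one requiring the most care. Here $\tau$ is updated to $\textsc{Pop}(\tau)$ and $\Delta$ is strengthened to $\Delta \wedge \neg\psi$. For any $\tau_i \in \textsc{Pop}(\tau)$, the inductive hypothesis gives $\phi \models \tau_i \limpl \Delta$, and I would supplement this with the observation that the precondition $\textsc{Solve}(\phi \wedge \psi) = \bot$ of \textsc{F-Close} means $\phi \wedge \psi$ is unsatisfiable, i.e., $\phi \models \neg\psi$. Combining these two facts yields $\phi \models \tau_i \limpl (\Delta \wedge \neg\psi)$, completing that case.

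I do not expect any serious obstacles: the rule preconditions have been designed precisely to make this invariant preservation straightforward. The only subtlety is remembering, in the \textsc{F-Close} case, that the unsatisfiability of $\phi \wedge \psi$ is what lets the newly-conjoined $\neg\psi$ be entailed by $\phi$ alone, so the inductive hypothesis on the surviving elements of $\tau$ still suffices.
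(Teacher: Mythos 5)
Your proof is correct and follows essentially the same induction as the paper's: the base case and the \textsc{F-Sat} and \textsc{F-Split} cases are argued identically, and in the \textsc{F-Close} case your direct combination of $\phi \models \tau_i \limpl \Delta$ with $\phi \models \neg\psi$ is just a rephrasing of the paper's observation that $\phi \models \Delta \liff (\Delta \wedge \neg\psi)$.
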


\begin{proof} 
The proof is by induction on derivations. \\

\noindent
(Base case) The lemma holds trivially in the initial state where $\Delta_0:=\better(\Model_0)$ and $\tau_0:=(\better(\Model_0))$.\\

\noindent
(Inductive case)
Let $\langle \Model', \Delta', \tau' \rangle$ be a state and assume that for each $\tau'_i\in\tau$, $\phi \models (\tau'_i \limpl \Delta')$.  Let   $\langle \Model'', \Delta'', \tau''\rangle$ be the next state.  We show that for each $\tau''_i\in\tau''$, $\phi \models (\tau''_i \limpl \Delta'')$.

\begin{itemize}
    \item $\textsc{F-Split}$ modifies $\tau'$ by replacing $\textsc{Top}(\tau')$ with $(\psi'_1,\dots,\psi'_k)$, where $\phi\models\bigvee_{j=1}^k \psi'_j \liff \textsc{Top}(\tau')$. $\Delta'$ remains unmodified. By the induction hypothesis, $\phi \models (\textsc{Top}(\tau') \limpl \Delta')$. It follows that $\phi \models (\bigvee_{j=1}^k \psi'_j \limpl \Delta')$, and thus $\phi \models (\psi'_j \limpl \Delta')$ for each $1 \leq j \leq k$.  The rest of $\tau'$ is unchanged, so the property is preserved.

    \item $\textsc{F-Sat}$ sets $\tau'' := (\Delta'')$. Clearly (as in the base case), $\phi \models \Delta'' \limpl \Delta''$.

    \item $\textsc{F-Close}$ pops $\textsc{Top}(\tau')$ from $\tau'$ and sets $\Delta'':=\Delta' \wedge \neg \textsc{Top}(\tau')$. The premise tells us that $\phi \wedge \textsc{Top}(\tau')$ is unsatisfiable, meaning that $\phi \models \neg\tops(\tau')$.  It follows that $\phi \models \Delta' \liff \Delta''$. Now, suppose $\tau''_i \in \tau''$.  Then, also $\tau''_i\in\tau'$.  By the induction hypothesis, $\phi \models (\tau''_i \limpl \Delta')$, but then also $\phi \models (\tau''_i \limpl \Delta'')$.
\qed
\end{itemize}

\end{proof}

\begin{lemma}\label{lemma_tau_disjunction}(Non-omissiveness)
Let $\langle \Model, \Delta, \tau \rangle$ be a state in a $\GO$-derivation.  If $\tau = (\tau_1,\dots,\tau_m)$, then $\phi \models \bigvee_{i=1}^m \tau_i \liff \Delta$.
\end{lemma}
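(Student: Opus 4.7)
I will prove this by induction on the length of the $\GO$-derivation, treating the three derivation rules as the inductive cases and handling separately the corner case where $\tau$ becomes the empty sequence (so that $\bigvee_{i=1}^{0}\tau_i$ is $\bot$).

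\medskip

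\noindent\textbf{Base case.} In the initial state, $\Delta_0 = \better(\Model_0)$ and $\tau_0 = (\Delta_0)$, so the disjunction is $\Delta_0$ itself, giving $\phi \models \Delta_0 \liff \Delta_0$ trivially.

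\medskip

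\noindent\textbf{Inductive step.} Let $\langle \Model',\Delta',\tau'\rangle$ be the current state with $\tau' = (\tau'_1,\dots,\tau'_m)$, and assume $\phi \models \bigvee_{i=1}^m \tau'_i \liff \Delta'$. I would then consider each rule.

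For $\textsc{F-Split}$, the rule replaces $\tops(\tau') = \tau'_1$ by $(\psi_1,\dots,\psi_k)$ where the premise gives $\phi \models \tau'_1 \liff \bigvee_{j=1}^k \psi_j$, and leaves $\Delta'$ unchanged. The new disjunction $\bigvee_{j=1}^k \psi_j \vee \bigvee_{i=2}^m \tau'_i$ is therefore $\T$-equivalent modulo $\phi$ to $\bigvee_{i=1}^m \tau'_i$, and hence to $\Delta'$, by the induction hypothesis. For $\textsc{F-Sat}$, the rule sets $\Delta'' := \Delta' \wedge \better(\Model')$ and $\tau'' := (\Delta'')$, so the new disjunction is a single formula equal to $\Delta''$, and the claim is immediate.

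For $\textsc{F-Close}$, the premise asserts that $\phi \wedge \tau'_1$ is unsatisfiable, hence $\phi \models \neg \tau'_1$. The rule yields $\Delta'' = \Delta' \wedge \neg \tau'_1$ and $\tau'' = (\tau'_2,\dots,\tau'_m)$. From $\phi \models \neg \tau'_1$ we immediately get $\phi \models \Delta'' \liff \Delta'$ and $\phi \models \bigvee_{i=2}^m \tau'_i \liff \bigvee_{i=1}^m \tau'_i$, so combining with the induction hypothesis yields $\phi \models \bigvee_{i=2}^m \tau'_i \liff \Delta''$. In the boundary case $m=1$, the new $\tau''$ is empty and its empty disjunction is $\bot$; but then the induction hypothesis gives $\phi \models \tau'_1 \liff \Delta'$, and with $\phi \models \neg \tau'_1$ we conclude $\phi \models \neg\Delta'$, hence $\phi \models \neg\Delta''$, which is exactly $\phi \models \bot \liff \Delta''$.

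\medskip

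\noindent The only delicate point is the $\textsc{F-Close}$ case together with the empty-$\tau$ convention, since both $\tau$ and $\Delta$ are modified simultaneously; handling this case hinges on the observation that the dropped disjunct $\tau'_1$ is $\phi$-unsatisfiable, which makes the two modifications coherent modulo $\phi$. The other two rules are essentially bookkeeping.
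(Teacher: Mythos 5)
Your proof is correct and follows essentially the same route as the paper's: induction on the derivation with a case analysis on the three rules, using the split premise for \textsc{F-Split}, the reset of $\tau$ for \textsc{F-Sat}, and $\phi \models \neg\tops(\tau')$ for \textsc{F-Close}. The only cosmetic difference is that you treat the $m=1$ boundary of \textsc{F-Close} explicitly, whereas the paper absorbs it into the convention $\bigvee_{i=m}^{n}\tau_i := \false$ for $m>n$.
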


\begin{proof} Note that we define $\bigvee_{i=m}^n \tau_i$ := $\false$ if $m>n$.
The proof of the lemma is by induction. \\

\noindent
(Base case) In the initial state, $\Delta_0 = \better(\Model_0)$, $\tau_0 = \{\better(\Model_0)\}$, and clearly, $\phi \models \better(\Model_0) \liff \better(\Model_0)$.\\

\noindent
(Inductive case)  Let $\langle \Model', \Delta', \tau' \rangle$ be a state with $\tau'\!=\!(\tau'_1,\dots,\tau'_{m'})$. Assume $\phi \models \bigvee_{i=1}^{m'}\tau'_i \liff \Delta'$.  Let  $ \langle \Model'', \Delta'', \tau'' \rangle$ be the next state with $\tau''\!=\!(\tau''_1,\!\dots\!,\tau''_{m''})$.  We show $\phi \!\models \!\bigvee_{i=1}^{m''} \tau''_i \!\liff\! \Delta''$.
\begin{itemize}
    \item $\textsc{F-Split}$ modifies $\tau'$ by replacing $\textsc{Top}(\tau')$ with $(\psi'_1,\dots,\psi'_k)$, where $\phi\models\bigvee_{i=1}^k \psi'_j \liff \textsc{Top}(\tau')$. $\Delta'$ remains unmodified.  Thus, $\phi\models \bigvee_{i=1}^{m''} \tau''_i \liff \bigvee_{i=1}^{m'} \tau'_i \liff \Delta' \liff \Delta''$.

    \item $\textsc{F-Sat}$ sets $\tau'' = \{\Delta''\}$. And clearly, $\phi \models \Delta'' \liff \Delta''$.

    \item $\textsc{F-Close}$ pops $\textsc{Top}(\tau')$ from $\tau'$ and sets $\Delta'':=\Delta' \wedge \neg \textsc{Top}(\tau')$. The premise tells us that $\phi \wedge \textsc{Top}(\tau')$ is unsatisfiable, meaning that $\phi \models \neg\tops(\tau')$.
    From this, it follows that: $(i)$ $\phi \models \bigvee_{i=1}^{m'} \tau'_i \liff \bigvee_{i=2}^{m'} \tau'_i$; and $(ii)$ $\phi \models \Delta' \liff \Delta' \wedge \neg \tops(\tau')$. From the induction hypothesis, together with (i) and (ii), we get $\phi \models \bigvee_{i=2}^{m'} \tau'_i \liff \Delta'\wedge \neg \tops(\tau')$. But this is exactly $\phi \models \bigvee_{i=1}^{m''} \tau''_i \liff \Delta''$.
\qed
    
\end{itemize}

\end{proof}    

\noindent
Corollary~\ref{corollary_Delta_eq_bot} states that if there are no branches to explore, the search space is empty.

\begin{corollary}\label{corollary_Delta_eq_bot}
    If $\langle \Model, \Delta, \tau \rangle$ is a state, and $\tau = \emptyset$, then $\phi \models \neg \Delta$.
\end{corollary}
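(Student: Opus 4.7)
The plan is to derive this corollary as a direct specialization of Lemma~\ref{lemma_tau_disjunction} (Non-omissiveness) to the case $\tau = \emptyset$. Lemma~\ref{lemma_tau_disjunction} already establishes the invariant that the disjunction of the formulas in $\tau$ is equivalent modulo $\phi$ to $\Delta$, so the corollary should follow by plugging in $m = 0$ and unwinding the definition of an empty disjunction.

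Concretely, I would first invoke the convention stated at the top of the proof of Lemma~\ref{lemma_tau_disjunction}, namely that the empty disjunction $\bigvee_{i=1}^{0} \tau_i$ is taken to be $\false$. Then, instantiating Lemma~\ref{lemma_tau_disjunction} at the given state with $\tau = \emptyset$ (so $m = 0$) yields $\phi \models \false \liff \Delta$. From this biconditional, one direction gives $\phi \models \Delta \limpl \false$, which is exactly $\phi \models \neg \Delta$.

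There is no real obstacle here: the corollary is essentially a one-line consequence of the non-omissiveness invariant, and the only subtlety is making explicit the empty-disjunction convention already fixed in Lemma~\ref{lemma_tau_disjunction}. The proof should be short enough to state in two or three sentences, with no induction or case analysis required beyond what Lemma~\ref{lemma_tau_disjunction} has already carried out.
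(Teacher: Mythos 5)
Your proposal is correct and follows exactly the paper's own argument: the paper likewise instantiates Lemma~\ref{lemma_tau_disjunction} with $m=0$, uses the convention that the empty disjunction is $\false$, and concludes $\phi \models \false \liff \Delta$, hence $\phi \models \neg\Delta$. Nothing is missing.
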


\begin{proof}
    Proof by Lemma~\ref{lemma_tau_disjunction}: if $\tau=\emptyset$, then $m = 0$ and $\bigvee_{i=1}^m \tau_i = \false$, so $\phi \models \false \liff \Delta$ and thus $\phi \models \neg\Delta$.
    \qed
\end{proof}

\noindent
Intuitively, Lemma~\ref{lemma_invariant} below states that the search space always contains better (according to $<_\GO$) solutions, if any, than the current solution, and any better solution than the current one is guaranteed to be in the search space.

\begin{lemma}\label{lemma_invariant}(Always Better)
Let $\langle \Model, \Delta, \tau \rangle$ be a state in a $\GO$-derivation, and let $\mathcal{J}$ be a $\T$-interpretation. 
If $\mathcal{J}$ is $\GO$-consistent, then 
$\mathcal{J} \models \Delta$\; iff \;$\mathcal{J} <_{\GO} \Model$.
\end{lemma}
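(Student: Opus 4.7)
My plan is to prove this by induction on the length of the derivation leading to the state $\langle \Model, \Delta, \tau \rangle$, using the defining property of $\better$ (Definition~\ref{def:better}) together with the auxiliary Lemmas~\ref{lemma_transitive} (transitivity of $<_\GO$) and \ref{lemma_tau_eq_Delta} (every branch formula in $\tau$ implies $\Delta$ modulo $\phi$).

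For the base case, the initial state has $\Delta_0 = \better(\Model_0)$, and the claim reduces immediately to the defining property of $\better$ applied to a $\GO$-consistent $\mathcal{J}$. For the inductive step, I would assume the equivalence holds in $\langle \Model', \Delta', \tau' \rangle$ and examine each derivation rule. The $\textsc{F-Split}$ case is trivial because neither $\Model$ nor $\Delta$ changes. The $\textsc{F-Close}$ case is also short: since $\phi \wedge \textsc{Top}(\tau')$ is unsatisfiable, we get $\phi \models \neg \textsc{Top}(\tau')$, so $\phi \models (\Delta' \wedge \neg\textsc{Top}(\tau') \liff \Delta')$; combined with the induction hypothesis and the fact that $\mathcal{J}$ is $\GO$-consistent (hence $\mathcal{J} \models \phi$), the equivalence transfers directly to the new state.

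The main obstacle will be the $\textsc{F-Sat}$ case, where $\Model'' = \Model'$ (the newly found solution) and $\Delta'' = \Delta' \wedge \better(\Model'')$. The forward direction is easy: if $\mathcal{J} \models \Delta''$, then $\mathcal{J} \models \better(\Model'')$, and $\GO$-consistency of $\mathcal{J}$ plus the defining property of $\better$ give $\mathcal{J} <_\GO \Model''$. The subtle reverse direction is the heart of the proof: assuming $\mathcal{J} <_\GO \Model''$, I must show $\mathcal{J} \models \Delta' \wedge \better(\Model'')$. The second conjunct again follows from the defining property of $\better$. For the first conjunct, I will first argue that the new solution $\Model''$ itself dominates the previous one, i.e.\ $\Model'' <_\GO \Model'$: by the rule's premise, $\Model'' \models \phi \wedge \textsc{Top}(\tau')$, and Lemma~\ref{lemma_tau_eq_Delta} gives $\phi \models (\textsc{Top}(\tau') \limpl \Delta')$, hence $\Model'' \models \Delta'$; applying the induction hypothesis to $\Model''$ (which is $\GO$-consistent since it satisfies $\phi$) yields $\Model'' <_\GO \Model'$. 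Then transitivity (Lemma~\ref{lemma_transitive}) combines $\mathcal{J} <_\GO \Model''$ with $\Model'' <_\GO \Model'$ to give $\mathcal{J} <_\GO \Model'$, and a final appeal to the induction hypothesis produces $\mathcal{J} \models \Delta'$, completing the case.

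In summary, the proof is a routine induction whose one nontrivial step hinges on the observation that each $\textsc{F-Sat}$ step strictly improves the candidate solution, which is what allows the accumulated constraint $\Delta$ to be safely strengthened without losing any dominators of the current $\Model$.
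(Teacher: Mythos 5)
Your proposal is correct and follows essentially the same route as the paper's proof: the same induction on the derivation, the same easy handling of \textsc{F-Split} and \textsc{F-Close}, and the same key step for the backward direction of \textsc{F-Sat} --- establishing $\Model'' <_\GO \Model'$ via Lemma~\ref{lemma_tau_eq_Delta} and the induction hypothesis, then chaining with transitivity (Lemma~\ref{lemma_transitive}). No gaps.
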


\begin{proof}
The proof of the lemma is by induction. \\

\noindent
(Base case) In the initial state, $\Delta_0 = \better(\Model_0)$, and $\Model_0$ is $\GO$-consistent by Lemma~\ref{lemma_model_omt_consistent}.  By the definition of $\better$, if $\mathcal{J}$ is a $\GO$-consistent interpretation, then $\mathcal{J}\models \better(\Model_0)$ iff $\mathcal{J} <_\GO \Model_0$.\\

\noindent
(Inductive case)  Let $\langle \Model', \Delta', \tau' \rangle$, be a state such that if $\mathcal{J}$ is a $\GO$-consistent interpretation, then $\mathcal{J}\models \Delta'$ iff $\mathcal{J} <_\GO \Model'$.
Let  $\langle \Model'', \Delta'', \tau''\rangle$ be the next state.  We show that if $\mathcal{J}$ is a $\GO$-consistent interpretation, then $\mathcal{J}\models \Delta''$ iff $\mathcal{J} <_\GO \Model''$.

\begin{itemize}

    \item $\textsc{F-Split}$ does not modify $\Model'$ or $\Delta'$.

    \item $\textsc{F-Sat}$ Let $\mathcal{J}$ be a $\GO$-consistent interpretation.\\
    
    (Forward direction) Assume $\mathcal{J}\models \Delta''$.  We know that $\Delta'' = \Delta' \wedge \better(\Model'')$.  Thus $\mathcal{J} \models \better(\Model'')$.  By the definition of $\better$, $\mathcal{J} <_\GO \Model''$.\\
    
    (Backward direction) Assume that $\mathcal{J} <_\GO \Model''$. Then, (i) $\Model''$ is $\GO$-consistent by the definition of $<_\GO$, and (ii) $\mathcal{J} \models \better(\Model'')$ by the definition of $\better$.
    From the  $\textsc{F-Sat}$ rule, we know that $\Delta'' = \Delta' \wedge \better(\Model'')$. Thus, by (ii), it remains to show that $\mathcal{J} \models \Delta'$.
    By (i) and Lemma~\ref{lemma_tau_eq_Delta}, we have $\Model'' \models \textsc{Top}(\tau') \limpl \Delta'$. By the premise of the $\textsc{F-Sat}$ rule and the definition of \solve, we have $\Model'' \models \textsc{Top}(\tau')$, so it follows that $\Model'' \models \Delta'$.  Now, by the induction hypothesis, we have $\Model'' <_{\GO} \Model'$, so by Lemma~\ref{lemma_transitive}, we have $\mathcal{J} <_{\GO} \Model'$.  Then, again by the induction hypothesis, we have $\mathcal{J} \models \Delta'$.

    \item $\textsc{F-Close}$ does not modify $\Model'$. Thus, $\Model'' = \Model'$.  Let $\mathcal{J}$ be a $\GO$-consistent interpretation.\\
    
    (Forward direction) Assume $\mathcal{J}\models \Delta''$. From the  $\textsc{F-Close}$ rule, we know that $\Delta'' = \Delta' \wedge \neg \tops(\tau')$. Thus, $\mathcal{J} \models \Delta'$. Then, by the induction hypothesis, $\mathcal{J} <_{\GO} \Model'$.  But $\Model' = \Model''$, so $\mathcal{J} <_\GO \Model''$\\

    (Backward direction) Assume that $\mathcal{J} <_\GO \Model''$ and thus $\mathcal{J} <_\GO \Model'$, since $\Model''=\Model'$. Then, $\mathcal{J} \models \Delta'$ by the induction hypothesis. Now, from the premise of $\textsc{F-Close}$, we know that $\Delta' \wedge \tops(\tau')$ is unsatisfiable. Since $\mathcal{J} \models \Delta'$, we must have that $\mathcal{J} \models \neg \tops(\tau')$. Finally, since $\Delta'' = \Delta' \wedge \neg \tops(\tau')$, we have $\mathcal{J} \models \Delta''$.\qed
\end{itemize}    
\end{proof}

\begin{theorem} (Solution Soundness) Let $\langle \Model, \Delta, \tau \rangle$ be a saturated state in a derivation for a \gomt problem $\GO\!$. Then, $\Model$ is an optimal solution to $\GO$.
\label{thm_basic_solution_soundness_appdx}
\end{theorem}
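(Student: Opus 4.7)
The plan is to chain together the invariants established by Lemmas~\ref{lemma_model_omt_consistent}--\ref{lemma_invariant} and Corollary~\ref{corollary_Delta_eq_bot}. The two things to verify are (i) that in any saturated state $\tau = \emptyset$, and (ii) that once $\tau = \emptyset$, the current $\Model$ cannot be \GO-dominated by any \T-interpretation.

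For (i), I would argue by contraposition: if $\tau \neq \emptyset$, some rule applies, so the state is not saturated. Let $\psi = \tops(\tau)$ and split on whether $\phi \wedge \psi$ is $\T$-satisfiable. If it is unsatisfiable, \textsc{F-Close} applies and strictly shrinks $\tau$, so the state genuinely changes. If it is satisfiable, \textsc{F-Sat} applies with witness $\Model' := \textsc{Solve}(\phi \wedge \psi)$, and the only subtlety is ensuring the state really changes. Lemma~\ref{lemma_tau_eq_Delta} gives $\Model' \models \Delta$, and then Lemma~\ref{lemma_invariant} gives $\Model' <_\GO \Model$; since $\prec$ is a strict partial order, $<_\GO$ is irreflexive, so $\Model' \neq \Model$, and the solution component of the state changes. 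Thus no saturated state can have a non-empty $\tau$.

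For (ii), from $\tau = \emptyset$ Corollary~\ref{corollary_Delta_eq_bot} yields $\phi \models \neg \Delta$, and Lemma~\ref{lemma_model_omt_consistent} gives that $\Model$ is \GO-consistent. Suppose for contradiction that some \T-interpretation $\mathcal{J}$ satisfies $\mathcal{J} <_\GO \Model$. Then $\mathcal{J}$ is \GO-consistent, so $\mathcal{J} \models \phi$ and hence $\mathcal{J} \models \neg \Delta$. But Lemma~\ref{lemma_invariant} applied to the \GO-consistent $\mathcal{J}$ gives $\mathcal{J} \models \Delta$, a contradiction. Thus no \T-interpretation \GO-dominates $\Model$, so $\Model$ is an optimal solution.

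The main obstacle is step (i), where one must discharge the ``resulting state is different'' clause in the definition of rule applicability. \textsc{F-Close} is straightforward since it shrinks $\tau$, but \textsc{F-Sat} requires showing that the new witness $\Model'$ cannot coincide with the current $\Model$. This is precisely where the structural invariants of $\tau$ and $\Delta$ maintained by the earlier lemmas, together with the irreflexivity of $<_\GO$ inherited from strictness of $\prec$, become indispensable; the rest of the argument is a straightforward application of the already-proved invariants.
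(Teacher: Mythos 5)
Your proof is correct and follows essentially the same route as the paper's: show that saturation forces $\tau = \emptyset$, then combine the $\GO$-consistency lemma, the corollary that $\tau=\emptyset$ implies $\phi\models\neg\Delta$, and the Always Better lemma to rule out any dominating interpretation. The only difference is in the first step, where the paper simply observes that \textsc{F-Split} is applicable whenever $\tau \neq \emptyset$ (one can always choose a split that yields a different state), whereas you case-split on the satisfiability of $\phi \wedge \tops(\tau)$ and in effect re-derive the paper's Improvement Step lemma to show that \textsc{F-Sat} genuinely changes the state; both discharges are valid.
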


\begin{proof}
       If $\tau\not=\emptyset$, $\textsc{F-Split}$ is applicable. 
       Thus, we must have $\tau=\emptyset$ in a saturated state. 
       By Lemma~\ref{lemma_model_omt_consistent}, $\Model$ is $\GO$-consistent, and, thus, $\Model \models \phi$.
       Since $\tau = \emptyset$, by Corollary~\ref{corollary_Delta_eq_bot}, $\phi \models \neg\Delta$. Consequently, by Lemma~\ref{lemma_invariant}, there is no $\mathcal{J}$, such that $\mathcal{J} \models \phi$ and $\mathcal{J} <_{\GO} \Model$. Thus, $\Model$ is an optimal solution to $\GO$. \qed
\end{proof}

\begin{lemma}\label{lemma_F-Sat_better} (Improvement Step)
    Let $\mathcal{S} = \langle \Model, \Delta, \tau \rangle$ and $\mathcal{S'} = \langle \Model', \Delta', \tau' \rangle$ be two states. 
    Suppose that $\mathcal{S'}$ is the result of applying \textsc{F-Sat} to $\mathcal{S}$.  Then $\Model' <_\GO \Model$.
\end{lemma}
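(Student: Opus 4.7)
My plan is to chain two previously established invariants, namely Lemma (Non-expansiveness) and Lemma (Always Better), through the witness interpretation returned by $\textsc{Solve}$ in the \textsc{F-Sat} premise.

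First I would unpack the \textsc{F-Sat} premise to extract what is known about $\Model'$. Since the rule requires $\textsc{Solve}(\phi \wedge \psi) = \Model'$ with $\Model' \neq \bot$ (where $\psi = \textsc{Top}(\tau)$), the specification of $\textsc{Solve}$ gives $\Model' \models \phi \wedge \psi$. In particular, $\Model' \models \phi$, so $\Model'$ is $\GO$-consistent, which is exactly the hypothesis required to invoke Lemma (Always Better).

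Next, I would connect $\Model'$ to the formula $\Delta$ of the precondition state $\mathcal{S}$. Since $\psi = \textsc{Top}(\tau) \in \tau$ and $\tau \neq \emptyset$, Lemma (Non-expansiveness) applied to $\mathcal{S}$ yields $\phi \models (\psi \limpl \Delta)$. Combining this with $\Model' \models \phi \wedge \psi$ shows $\Model' \models \Delta$.

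Finally, I would apply Lemma (Always Better) to the state $\mathcal{S} = \langle \Model, \Delta, \tau\rangle$ with the $\GO$-consistent interpretation $\mathcal{J} := \Model'$: the ``if'' direction of that lemma converts $\Model' \models \Delta$ into $\Model' <_\GO \Model$, which is the desired conclusion. There is no real obstacle here; the only subtlety is ensuring that one applies Lemma (Always Better) to the \emph{pre}-application state $\mathcal{S}$ (so that $\Delta$ still refers to the old $\Delta$, not $\Delta' = \Delta \wedge \better(\Model')$), and that one reads the $\textsc{Solve}$ specification strictly enough to conclude $\Model'$ satisfies both $\phi$ and $\psi$ individually.
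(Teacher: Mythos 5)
Your proposal is correct and follows essentially the same route as the paper's proof: extract $\Model' \models \phi \wedge \textsc{Top}(\tau)$ from the \textsc{F-Sat} premise, use Non-expansiveness to get $\Model' \models \Delta$, and conclude via Always Better applied to the pre-application state. The only cosmetic difference is that you derive $\GO$-consistency of $\Model'$ directly from $\Model' \models \phi$ where the paper cites its $\GO$-consistency lemma; both are valid.
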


\begin{proof}
     From the premise of the $\textsc{F-Sat}$ rule, we know that $\Model' \models \phi \wedge \textsc{Top}(\tau)$. Then, by Lemma~\ref{lemma_tau_eq_Delta}, we have $\Model' \models \Delta$. It follows from Lemmas~\ref{lemma_model_omt_consistent} and~\ref{lemma_invariant} that $\Model' <_{\GO} \Model$.
     \qed
\end{proof}

Now, recall that $A_t = \{ t^\Model \,\mid\, \Model \text{ is $\GO$-consistent} \}$ is the set of all values that $t$ has in $\T$-interpretations satisfying $\phi$. The termination and solution completeness arguments follow.

\begin{theorem} (Termination)~\label{thm_basic_termination_appdx}
    If $\prec$ is well-founded over $A_t$,
any progressive strategy reaches a saturated state.
\end{theorem}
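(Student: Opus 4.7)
The plan is to argue by contradiction: I assume we have an infinite derivation $D$ produced by some progressive strategy, and I partition the applications of the three rules to get a contradiction. Once $D$ is shown to be finite, condition (i) of progressiveness immediately gives that its final state is saturated, proving the theorem.

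First, I would use condition (ii) of progressiveness to pass to a suffix $D'$ of $D$ in which $\textsc{F-Split}$ is never applied (such a suffix exists because $D$ is infinite but $\textsc{F-Split}$ occurs only finitely many times). So $D'$ consists entirely of $\textsc{F-Sat}$ and $\textsc{F-Close}$ steps. Next, I would look at the sequence of solutions $\Model_0, \Model_1, \Model_2, \ldots$ appearing at those positions of $D'$ immediately after an application of $\textsc{F-Sat}$. By Lemma~\ref{lemma_F-Sat_better} (Improvement Step), each $\textsc{F-Sat}$ step strictly $\GO$-improves the current solution, so $\Model_{i+1} <_\GO \Model_i$ for every $i$, which by definition means $t^{\Model_{i+1}} \prec t^{\Model_i}$. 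By Lemma~\ref{lemma_model_omt_consistent}, every $\Model_i$ is $\GO$-consistent, so each $t^{\Model_i}$ lies in $A_t$. Hence the $t^{\Model_i}$ form an infinite strictly descending chain in $A_t$ under $\prec$, contradicting the well-foundedness hypothesis. Therefore only finitely many $\textsc{F-Sat}$ steps occur in $D'$.

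I would then pass to a further suffix $D''$ in which no $\textsc{F-Sat}$ step occurs either, so only $\textsc{F-Close}$ steps remain. Each application of $\textsc{F-Close}$ strictly decreases the length of $\tau$ by one and never increases it, so there can be at most $|\tau|$ consecutive $\textsc{F-Close}$ applications before $\tau$ becomes empty (at which point $\textsc{F-Close}$ is no longer applicable since it requires $\tau \neq \emptyset$). This contradicts the assumption that $D''$ is infinite, and hence $D$ itself must be finite.

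Since $D$ is finite, the strategy must halt in its last state. By condition (i) of a progressive strategy, this final state is saturated, as required. The main obstacle is the second step above: identifying the right invariant that connects the $\textsc{F-Sat}$ steps to a descending $\prec$-chain in $A_t$. This is precisely where Lemma~\ref{lemma_F-Sat_better} (supported by Lemma~\ref{lemma_model_omt_consistent} to place the chain in $A_t$) does the heavy lifting, letting well-foundedness of $\prec$ on $A_t$ be applied cleanly.
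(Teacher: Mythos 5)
Your proposal is correct and follows essentially the same route as the paper's proof: finitely many \textsc{F-Split} applications by progressiveness, finitely many \textsc{F-Sat} applications because Lemma~\ref{lemma_F-Sat_better} together with $\GO$-consistency would otherwise yield an infinite descending $\prec$-chain in $A_t$, and finitely many \textsc{F-Close} applications because each one shrinks $\tau$ while nothing replenishes it. The only cosmetic difference is that you frame it as a contradiction over suffixes, whereas the paper argues directly, and the paper explicitly notes that \textsc{F-Close} leaves $\Model$ unchanged (which you use implicitly when chaining consecutive \textsc{F-Sat} solutions).
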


\begin{proof}
Since a progressive strategy uses the \textsc{F-Split} rule only a finite number of times, any infinite derivation must use either the \textsc{F-Sat} rule or the \textsc{F-Close} rule an infinite number of times.

   Let $D$ be such a derivation and suppose that \textsc{F-Sat} is used an infinite number of times along $D$.  If $s$ is the sequence consisting only of states of $D$ resulting from applications of \textsc{F-Sat}, then let $I$ be the corresponding solution sequence. Since \textsc{F-Close} does not change $\Model$, we have that if $\Model$ and $\Model'$ are consecutive solutions in $I$, i.e., solutions in states resulting from applications of \textsc{F-Sat}, then $\Model' <_\GO \Model$ by Lemma~\ref{lemma_F-Sat_better}. But then, by the definition of $<_\GO$, there must be an infinite descending $\prec$-chain in $A_t$, contradicting the assumption that $\prec$ is well-founded on $A_t$. 
   Thus, $D$ contains only a finite number of applications of \textsc{F-Sat}.
   
Since both \textsc{F-Split} and \textsc{F-Sat} are applied only a finite number 
of times in $D$, eventually no formulas get added to $\tau$.
Since \textsc{F-Close} applies only to states with a non-empty $\tau$
and reduces its size by one,
this rule too can be applied at most finitely many times in $D$.
Hence, $D$ cannot be infinite.
   
Thus, when $\prec$ is well-founded over $A_t$, any derivation using a progressive strategy terminates.
By the definition of progressive, the final state must be saturated. \qed
\end{proof}

\begin{theorem} (Solution Completeness)
    If $\prec$ is well-founded over $A_t$ and $\GO$ has one or more optimal solutions, every derivation generated by a progressive derivation strategy ends with a saturated state containing one of them.\label{thm_basic_solution_completeness_appdx}
\end{theorem}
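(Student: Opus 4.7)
The plan is to derive this theorem as an immediate corollary of the two preceding theorems, namely Solution Soundness (Theorem~\ref{thm_basic_solution_soundness_appdx}) and Termination (Theorem~\ref{thm_basic_termination_appdx}), since together they cover exactly the two obligations of solution completeness: reaching a saturated state and having an optimal solution in that state.

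First I would fix an arbitrary derivation $D$ generated by a progressive strategy under the hypothesis that $\prec$ is well-founded over $A_t$. By Theorem~\ref{thm_basic_termination_appdx}, any such derivation reaches a saturated state; let $\langle \Model, \Delta, \tau \rangle$ be the final state of $D$. Then I would invoke Theorem~\ref{thm_basic_solution_soundness_appdx} on this saturated state to conclude that $\Model$ is an optimal solution to $\GO$.

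The hypothesis that $\GO$ has at least one optimal solution is not strictly needed for the chain of reasoning above (since soundness alone guarantees optimality of $\Model$ once saturation is reached), but it plays a consistency role: it rules out the vacuous case and confirms that the saturated state's $\Model$ is a witness to the existence of such a solution rather than a degenerate artifact. I would mention this briefly to justify the phrasing of the theorem.

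There is no real obstacle here; the only subtlety is to ensure the two preceding theorems are applied with matching hypotheses, which they are verbatim. Hence the proof reduces to two sentences citing Theorems~\ref{thm_basic_solution_soundness_appdx} and~\ref{thm_basic_termination_appdx} in sequence, as the authors themselves indicate in the main body.
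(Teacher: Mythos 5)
Your proposal matches the paper's proof exactly: the authors also derive Solution Completeness as a direct consequence of Theorem~\ref{thm_basic_solution_soundness_appdx} (Solution Soundness) and Theorem~\ref{thm_basic_termination_appdx} (Termination), in the same order you apply them. Your side remark that the existence hypothesis is not strictly needed (since soundness already yields optimality of the final $\Model$) is also correct.
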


\begin{proof}
    The proof is a direct consequence of Theorem~\ref{thm_basic_solution_soundness_appdx} and Theorem~\ref{thm_basic_termination_appdx}.\qed
\end{proof}

\begin{theorem}
\label{theorem_solve_termination_apdx} 
If $\GO$ has one or more optimal solutions and is not unbounded along any branch, then every derivation generated by a progressive hybrid strategy, where $\textsc{Solve}$ is replaced by $\textsc{Optimize}$ in $\textsc{F-Sat}$, ends with a saturated state containing one of them.
\end{theorem}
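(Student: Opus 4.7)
The plan is to follow the sketch given above, establishing finiteness of any such derivation and then invoking Theorem~\ref{thm_basic_solution_soundness_appdx} for optimality. By the progressive assumption, \textsc{F-Split} is applied only finitely many times, so I would isolate the suffix $D'$ of the derivation starting at the state immediately after the final \textsc{F-Split} application (taking the whole derivation if \textsc{F-Split} is never used). Throughout $D'$ only \textsc{F-Sat} and \textsc{F-Close} can fire, and the problem reduces to bounding the length of $D'$.

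The main technical step, and the hard part, is to show that at most two consecutive \textsc{F-Sat} applications can occur within $D'$. Let $\psi = \textsc{Top}(\tau)$ at the first \textsc{F-Sat} in $D'$, and let $\Model'$ be the interpretation returned by \textsc{Optimize}. Because \textsc{Optimize} produces a \GO-consistent interpretation that is optimal on the branch $\phi \wedge \psi$, no \GO-consistent interpretation satisfying $\psi$ can \GO-dominate $\Model'$. After this application $\tau$ is reset to the singleton $(\Delta')$ with $\Delta' = \Delta \wedge \better(\Model')$. If a second \textsc{F-Sat} fires on this new top, \textsc{Optimize} returns some $\Model''$ optimal for $\phi \wedge \Delta'$, and then $\Delta'' = \Delta' \wedge \better(\Model'')$ must be unsatisfiable together with $\phi$: any witness would be \GO-consistent, lie in $\Delta'$, and \GO-dominate $\Model''$ by Definition~\ref{def:better}, contradicting the optimality of $\Model''$ on that branch. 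Hence the step right after the second \textsc{F-Sat} is forced to be an \textsc{F-Close} that empties $\tau$.

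With this key observation in hand, the remaining counting is straightforward. Before the first \textsc{F-Sat} in $D'$, only \textsc{F-Close} can apply, and each application strictly decreases $|\tau|$ since \textsc{F-Split} is unavailable in $D'$; so this initial segment has length at most the starting value of $|\tau|$. Combined with the at-most-two \textsc{F-Sat} applications and the concluding \textsc{F-Close}, the suffix $D'$ is finite. Progressiveness then forces the derivation to terminate in a saturated state, and Theorem~\ref{thm_basic_solution_soundness_appdx} yields that the terminal interpretation $\Model$ is an optimal \GO-solution. The non-unboundedness assumption enters exactly to ensure that each invocation of \textsc{Optimize} inside \textsc{F-Sat} actually returns a \GO-consistent interpretation, so that the derivation never gets stuck outside the rule system.
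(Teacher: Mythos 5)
Your proposal is correct and follows essentially the same route as the paper's proof: isolate the suffix after the last \textsc{F-Split}, bound the initial run of \textsc{F-Close} applications by $|\tau|$, and use the optimality of the interpretation returned by \textsc{Optimize} on the branch $\phi \wedge \Delta'$ to show that after a second \textsc{F-Sat} the formula $\phi \wedge \Delta''$ is unsatisfiable, forcing an \textsc{F-Close} that empties $\tau$ and saturates the state, whereupon Theorem~\ref{thm_basic_solution_soundness_appdx} gives optimality. Your closing remark on where non-unboundedness is needed also matches the paper's treatment (its footnote on \textsc{Optimize}).
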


\begin{proof}

We show that every such derivation must terminate in a saturated state.  The result then follows from Theorem~\ref{thm_basic_solution_soundness}.

Consider such a derivation $D$.  A progressive strategy only uses the $\textsc{F-Split}$ rule a finite number of times.  Let $\mathcal{S}$ be the state resulting from the last application of $\textsc{F-Split}$ in $D$ (or the initial state if $\textsc{F-Split}$ is never used), and let $D'$ be the sequence that is the suffix of the original derivation starting with $\mathcal{S}$.  Because $\textsc{F-Close}$ reduces the size of $\tau$ by one, $\textsc{F-Close}$ can only be applied a finite number of times in a row starting from $\mathcal{S}$.  Let $\hat{\mathcal{S}}$ be the first state in $D'$ to which $\textsc{F-Close}$ is not applied.  Either $\hat{\mathcal{S}}$ is saturated, in which case we are done, or $\textsc{F-Sat}$ is applied to $\hat{\mathcal{S}}$, resulting in a state $\mathcal{S}' = \langle \Model', \Delta', (\Delta')\rangle$, where $\Delta' = \Delta \wedge \better(\Model')$ for some $\Delta$.  If, in this state, $\textsc{F-Close}$ is applied, then the resulting state has $\tau=\emptyset$ and is thus saturated.  On the other hand, suppose $\textsc{F-Sat}$ is applied to this state, resulting in $\mathcal{S}'' = \langle \Model'', \Delta'', \tau''\rangle$, where $\tau'' = (\Delta'')$.

By the definition of $\textsc{Optimize}$, we know that $\Model''$ is an optimal solution satisfying $\phi \wedge \Delta \wedge \better(\Model')$.  In particular, this means that $\phi \wedge \Delta \wedge \better(\Model') \wedge \better(\Model'')$ is not satisfiable.  But $\Delta'' = \Delta \wedge \better(\Model') \wedge \better(\Model'')$, and $\tau'' = (\Delta'')$, so this implies that $\textsc{F-Sat}$ cannot be applied to $\mathcal{S}''$ and $\textsc{F-Close}$ can.  Because the derivation is progressive, it thus must apply $\textsc{F-Close}$ next, resulting in a state with $\tau=\emptyset$, a saturated state. \qed
\end{proof}

\end{document}